\definecolor{mygreen}{RGB}{10,150,110}
\definecolor{myred}{RGB}{150,10,20}
\renewcommand{\epsilon}{\varepsilon}
\DeclareMathOperator{\E}{\ensuremath{\normalfont \textbf{E}}}
\newcommand{\sbcomment}[1]{{\textcolor{blue}{[\textbf{Soheil:} #1]}}}
\newcommand{\mrcomment}[1]{{\textcolor{red}{[\textbf{Mohammad:} #1]}}}
\newcommand{\Aviad}[1]{{\textcolor{purple}{[\textbf{Aviad:} #1]}}}
\newcommand{\hiddencomment}[1]{}
\newcommand{\yes}[0]{\ensuremath{\mathsf{YES}}}
\newcommand{\no}[0]{\ensuremath{\mathsf{NO}}}
\newcommand{\yesdist}[0]{\ensuremath{\mathcal{D}_{\yes}}}
\newcommand{\nodist}[0]{\ensuremath{\mathcal{D}_{\no}}}
\DeclareMathOperator{\poly}{poly}
\DeclareMathOperator{\polylog}{polylog}
\crefname{lemma}{Lemma}{Lemmas}
\crefname{theorem}{Theorem}{Theorems}
\crefname{property}{Property}{Properties}
\crefname{claim}{Claim}{Claims}
\crefname{result}{Result}{Results}
\crefname{definition}{Definition}{Definitions}
\crefname{observation}{Observation}{Observations}
\crefname{proposition}{Proposition}{Propositions}
\crefname{assumption}{Assumption}{Assumptions}
\crefname{line}{Line}{Lines}
\crefname{figure}{Figure}{Figures}
\crefname{equation}{}{}
\crefname{section}{Section}{Sections}
\crefname{appendix}{Appendix}{Appendices}
\crefname{algCounter}{Algorithm}{Algorithms}
\Crefname{algCounter}{Algorithm}{Algorithms}
\newtheorem{theorem}{Theorem}
\newtheorem{lemma}{Lemma}[section]
\newtheorem{proposition}[lemma]{Proposition}
\newtheorem{corollary}[lemma]{Corollary}
\newtheorem{definition}[lemma]{Definition}
\newtheorem{claim}[lemma]{Claim}
\newtheorem{observation}[lemma]{Observation}
\newtheorem{remark}{Remark}
\newtheorem*{remark*}{Remark}
\definecolor{mylightgray}{RGB}{230,230,230}
\algnewcommand{\IIf}[2]{\textbf{if} #1 \textbf{then} #2}
\algnewcommand{\EndIIf}{\unskip\ \algorithmicend\ \algorithmicif}
\newenvironment{whitetbox}{
\par\addvspace{0.1cm}
\begin{tcolorbox}[width=\textwidth,
                  boxsep=5pt,
                  left=1pt,
                  right=1pt,
                  top=2pt,
                  bottom=2pt,
                  boxrule=1pt,
                  arc=0pt,
                  colframe=black,
                  colback=white
                  ]
}{
\end{tcolorbox}
}
\newcounter{algCounter}
\renewcommand{\paragraph}{%
  \@startsection{paragraph}{4}%
  {\z@}{10pt}{-1em}%
  {\normalfont\normalsize\bfseries}%
}
 \title{Local Computation Algorithms for Maximum Matching:\\ New Lower Bounds}
 \author{
 Soheil Behnezhad\\{\em Northeastern University} \and
 Mohammad Roghani\\{\em Stanford University} \and
 Aviad Rubinstein\\{\em Stanford University}}
\date{}
\begin{document}

\maketitle

\thispagestyle{empty}

\begin{abstract}
    We study {\em local computation algorithms} (LCA) for maximum matching. An LCA does not return its output entirely, but reveals parts of it upon query. For matchings, each query is a vertex $v$; the LCA should return whether $v$ is matched---and if so to which neighbor---while spending a small time per query.

    \smallskip\smallskip
     In this paper, we prove that any LCA that computes a matching that is at most an additive of $\epsilon n$ smaller than the maximum matching in $n$-vertex graphs of maximum degree $\Delta$ must take at least $\Delta^{\Omega(1/\varepsilon)}$ time. This comes close to the existing upper bounds that take $(\Delta/\epsilon)^{O(1/\epsilon^2)}\polylog (n)$ time.

     \smallskip\smallskip
     In terms of sublinear time algorithms, our techniques imply that any algorithm that estimates the size of maximum matching up to an additive error of $\epsilon n$ must take $\Delta^{\Omega(1/\epsilon)}$ time.  This negatively resolves a decade old open problem of the area (see Open Problem 39 of sublinear.info) on whether such estimates can be achieved in $\poly(\Delta/\epsilon)$ time.
\end{abstract}

{
\clearpage
\hypersetup{hidelinks}
\vspace{1cm}
\renewcommand{\baselinestretch}{0.1}
\setcounter{tocdepth}{2}
\thispagestyle{empty}
\clearpage
}

\setcounter{page}{1}
\section{Introduction}
Over the last two decades, there has been growing interest in the development and study of algorithms that do not return their output in whole, but instead return parts of it upon query. These algorithms, known as {\em local computation algorithms} (LCAs), have emerged as a powerful tool in the field of sublinear algorithms, enabling efficient processing of massive data for problems where the output is too large to be stored or reported in its entirety. In this work, we study LCAs for the {\em maximum matching} problem. This is one of the most intensively studied problems in the literature of LCAs. We first overview the model and prior work, then describe our contribution.

\paragraph{The LCA Model:} Local computation algorithms were formalized in the works of \citet{RubinfeldTVX11} and \citet{AlonRVX12}. For graph problems, an LCA can access the graph through adjacency list queries. That is, by specifying a vertex $v$ and an integer $i$, the LCA is either given the $i$-th neighbor of vertex $v$ or ``$\perp$'' if $v$ has less than $i$ neighbors. When questioned about a vertex $v$, an LCA can make queries to the graph and a tape of randomness to compute its output on that vertex. In the case of matchings---the focus of our paper---this output is whether the questioned vertex is matched and if so to which of its neighbors. These answers must be independent of the order of questions, meaning that the LCA should be able to produce consistent answers even if multiple vertices are questioned in parallel.  The worst-case number of queries an LCA conducts to answer any single question is the measure of its complexity.

As standard in the literature, we say a matching $M$ in graph $G$ provides an $(\alpha, \epsilon n)$-approximation if $|M| \geq \alpha \mu(G) - \epsilon n$, where $\mu(G)$ is the size of the maximum matching in $G$.

\paragraph{Known Algorithms:}  The (approximate) maximum matching problem has been studied extensively both in the literature of LCAs \cite{RubinfeldTVX11,AlonRVX12,levironitt,KapralovSODA20,Ghaffari-FOCS22,Bhattacharya-ArXiv23} and the closely related model of sublinear time algorithms \cite{ParnasRon07,NguyenOnakFOCS08,YoshidaYISTOC09,behnezhad2021,ChenICALP20,BehnezhadRRS-SODA23,behnezhad23,BKS22}. Earlier works in both models only focused on bounded degree graphs where the maximum degree $\Delta$ is constant.  
There has been a sequence of improvements on LCAs \cite{ReingoldV16,levironitt,GhaffariUittoSODA19,Ghaffari-FOCS22,levironitt,KapralovSODA20,Bhattacharya-ArXiv23}.
The best-known algorithm for a $(1, \epsilon n)$-approximate maximum matching is due to \citet*{levironitt} which adapts the elegant sublinear time algorithm of \citet*{YoshidaYISTOC09} to the LCA model, achieving a running time of $(\Delta/\epsilon)^{O(1/\epsilon^2)}\poly\log(n)$ per query. Note that the LCA of \cite{levironitt,YoshidaYISTOC09} runs in time $\poly(\Delta, \log n)$ whenever $\epsilon$ is constant. Thus it runs efficiently even in the case of ``graphs of non-constant degree'' \cite{levironitt}. If instead of a $(1, \epsilon n)$-approximation we desire a $(1/2, 0)$ approximation, then this can be done in $O(\Delta \poly\log n)$ time \cite{behnezhad2021}.\footnote{We note that the LCA model is not directly studied in \cite{behnezhad2021}, but the abovementioned bound follows as a corollary of \cite{behnezhad2021}.}


We also note that in the orthogonal dense regime (with adjacency matrix queries), \citet*{Bhattacharya-ArXiv23} showed in a very recent paper that there is an LCA with complexity $n^{2-\Omega_\epsilon(1)}$ that computes a $(1, \epsilon n)$-approximate matching.

\paragraph{Known Lower Bounds:} Only two results in the literature give lower bounds for LCAs approximating maximum matching. The first one, due to \citet*{ParnasRon07} from 2007, proves that any LCA computing a constant approximation of maximum matching needs to spend $\Omega(\Delta)$ time. The essence of the lower bound of \cite{ParnasRon07} is a construction, where each vertex has degree $\Theta(\Delta)$ and has only one ``important'' edge that has to be in any constant approximate matching. Thus, any LCA that reports a constant approximate matching must scan a constant fraction of neighbors of the vertex being queried to find this important edge, implying the claimed $\Omega(\Delta)$ lower bound. Note that this approach cannot possibly result in an $\omega(\Delta)$ lower bound.

The second, more recent, result by \citet*{behnezhad23} breaks this linear in $\Delta$ barrier. They gave a construction with maximum degree $\Delta = \Theta(n)$, on which any $(2/3+\Omega(1), \epsilon n)$-approximate algorithm must spend at least $n^{1.2-o(1)}$ time. While the result of \cite{behnezhad23} is stated for sublinear time algorithms, it carries over to the LCA model as well implying a lower bound of $\Delta^{1.2-o(1)}$ for any LCA obtaining a $(2/3+\Omega(1), \epsilon n)$-approximation of maximum matching. The key to the lower bound of \cite{behnezhad23} is a {\em correlation decay} based argument that shows queries far away from a vertex $v$ do not help finding the ``important edge'' of $v$, and one has to explore $\Delta^{1.2-o(1)}$ neighbors in the 2-hop of $v$ to find this edge. We note that the construction of \cite{behnezhad23} can be easily solved if one collects the whole 2-hop neighborhood of the queried vertex, thus it does not lead to $\omega(\Delta^2)$ lower bounds even for much larger than 2/3 approximations.

\paragraph{Our Contribution:} In this paper, we prove a new lower bound on the complexity of LCAs for $(1, \epsilon n)$-approximate matchings. We show that:

\begin{theorem}\label{thm:main}
Let $\epsilon \leq 0.01$. For any choice of $\log^4 n \leq \Delta \leq n^\epsilon$, there is an $n$-vertex bipartite graph $G$ of maximum degree $\Delta$ such that any LCA that with probability at least $0.51$ computes a $(1, \epsilon n)$-approximate maximum matching of $G$ must make at least  $\Delta^{\Omega(1/\varepsilon)}$ queries to $G$.
\end{theorem}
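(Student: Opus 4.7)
The plan is to prove \Cref{thm:main} via a $\yes/\no$ indistinguishability argument standard in the area of sublinear and local algorithms. I will exhibit two distributions $\yesdist$ and $\nodist$ over $n$-vertex bipartite graphs of maximum degree $\Delta$ with the following two properties: (1) the expected maximum matching sizes satisfy $\E_{G \sim \yesdist}[\mu(G)] \geq \E_{G \sim \nodist}[\mu(G)] + 4\epsilon n$; (2) any algorithm that makes $q = \Delta^{o(1/\epsilon)}$ adjacency-list queries rooted at a vertex $v$ cannot distinguish $G \sim \yesdist$ from $G \sim \nodist$ with advantage more than $1/\poly(n)$. Given (1) and (2), an LCA returning a $(1, \epsilon n)$-approximate matching could be simulated on a random sample of vertices and the fraction reported as matched would additively estimate $\mu(G)/n$ to error $o(\epsilon)$, contradicting indistinguishability.

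The hard graph is a layered bipartite random expander with $L = \Theta(1/\epsilon)$ layers $V_0, V_1, \ldots, V_L$, placed alternately on the two sides of the bipartition; consecutive pairs $(V_i, V_{i+1})$ are joined by an approximately bi-regular random graph of degree roughly $\Delta$. To create the matching gap, I will plant a small combinatorial defect at the deepest layer: in $\yesdist$ the defect stays local and leaves only $o(\epsilon n)$ vertices unmatched at level $L$, whereas in $\nodist$ the defect is shifted (for example, by toggling a small set of edges or adjusting the size of a bottom part by $\Theta(\epsilon n)$) so that, via a Hall/K\"onig-type argument, the deficiency is forced to propagate upward across all $L$ layers, reducing $\mu(G)$ by $\Omega(\epsilon n)$ overall.

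Property (2) is the technical heart of the argument. Because the two distributions are identical on any subgraph that does not reach the deepest layer, it suffices to upper bound the probability that the algorithm's BFS-like exploration reaches depth $L$ from its starting vertex within $q$ queries. Since consecutive layers are random bi-regular expanders, the $r$-BFS neighborhood of a typical vertex is, with high probability, locally a $\Delta$-ary tree, so any exploration strategy reaching depth $L$ requires $\Delta^{\Omega(L)}$ queries. Choosing $L = \Theta(1/\epsilon)$ and $q = \Delta^{c/\epsilon}$ for a small constant $c > 0$ gives the desired $1/\poly(n)$ advantage. This iterated correlation-decay argument is what boosts the $\Delta^{1.2-o(1)}$ bound of \cite{behnezhad23}, which used a two-layer construction, to $\Delta^{\Omega(1/\epsilon)}$.

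The main obstacle is balancing (1) and (2) simultaneously: amplifying the matching gap to $\Omega(\epsilon n)$ requires the defect to \emph{propagate} through every layer, but propagation risks leaving a detectable combinatorial trace in the local neighborhoods. I expect the resolution will require a careful coupling of the two random ensembles, hiding the defect behind a ``noise'' layer of randomness at each level, together with a global Hall-type argument showing that the Edmonds--Gallai deficiency of a typical $\nodist$ instance is $\Omega(\epsilon n)$ larger than that of a typical $\yesdist$ instance. Controlling the tail of the BFS exploration process in the presence of these noise layers, while keeping the defect invisible locally but globally matching-determining, will be the most quantitative and delicate step, and is what ultimately fixes the constant in the exponent of $\Delta^{\Omega(1/\epsilon)}$.
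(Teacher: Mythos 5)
Your high-level framework (a $\yesdist$/$\nodist$ indistinguishability argument with a layered construction of depth $\Theta(1/\epsilon)$, reduced from LCA to estimation by sampling) matches the paper's overall strategy. However, your proposed indistinguishability argument (property (2)) contains a critical error that the paper explicitly anticipates and that its main new idea is designed to fix.

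You argue that because the $r$-BFS neighborhood of a vertex is locally a $\Delta$-ary tree, ``any exploration strategy reaching depth $L$ requires $\Delta^{\Omega(L)}$ queries.'' This is false: an algorithm is not obligated to expand the whole BFS tree. It can simply follow a single downward path, reaching depth $L$ in $L = \Theta(1/\epsilon)$ queries, not $\Delta^{\Theta(1/\epsilon)}$. Worse, even without reaching the bottom, a short random walk from the queried vertex $v$ hits the degree-$1$ layer $S$ in $O(\Delta/\epsilon)$ expected steps in a construction like yours, and the distribution of hitting times (estimated by repeating the walk $O(\log n)$ times) suffices to determine which layer $v$ is in. This is exactly the attack that defeats the naive layered expander, and the paper spells it out in its Step 1 discussion: the core layers by themselves are not hidden at all. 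Hence ``does not reach the deepest layer'' is not a valid condition for indistinguishability, and your bound of $\Delta^{\Omega(L)}$ does not follow from local tree-likeness.

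The missing piece is the hierarchy of \emph{delusive vertices} $D_1,\dots,D_{1/\epsilon}$, which is the paper's central technical contribution. Each $D_i$ is attached to all layers at level $\geq i$ with carefully tuned binomial degree profiles so that (a) a random exploration hits a delusive vertex long before it reaches $S$, and (b) once a path crosses a sufficiently deep $D_i$, the conditional distribution over the root's layer is exactly uniform among $\{A_{1/\epsilon},B_{1/\epsilon},D_{1/\epsilon}\}$, no matter what the subtree below looks like. This forces any distinguishing algorithm to find a root-to-$S$ path avoiding all ``mixer'' delusive vertices, which the paper shows (via its label-guessing game and a per-level $\widetilde O(1/d)$ survival probability) happens with probability $\widetilde O(1/d^{1/\epsilon-1})$ per path. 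Your final paragraph gestures at ``hiding the defect behind a noise layer of randomness at each level,'' which is directionally correct but unsubstantiated: without specifying what the noise layer is, why a random walk cannot filter it out (your single-level version of noise is defeated in $\widetilde O(d^2)$ time, as the paper also notes), and why $\Theta(1/\epsilon)$ nested noise levels are needed and sufficient, the proof does not go through. The multi-level delusive hierarchy, the reduction to the tree label-guessing game (which requires a separate argument that $o(d^{1/\epsilon})$ queries see no cycle and no ``broken'' degree-sequence vertices), and the mixer-vertex coupling in Lemmas~\ref{lem:mixer-vertex-in-tree} and \ref{lem:same-tree-different-labels} are what your proposal is missing.
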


\begin{remark}
The lower bound of \cref{thm:main} holds even if the queried vertices are chosen uniformly at random from a set of $\Theta(n)$ vertices, and even if the answers produced by the LCA can be a function of the order of queries (i.e. if the LCA is not query oblivious).
\end{remark}

\cref{thm:main} significantly improves prior lower bounds and shows that a large polynomial dependence on $\Delta$ is necessary for any LCA computing a $(1, \epsilon n)$-approximate maximum matching. It also comes close to the existing $\Delta^{O(1/\epsilon^2)}\polylog(n)$ time LCAs of \cite{levironitt,YoshidaYISTOC09}, showing that these algorithms are not far from optimal.

\paragraph{Implications for $(1, o(1))$-Approximations:} Obtaining a $\poly(\Delta, \log n)$ bound is a natural target for LCAs (see \cite{Ghaffari-FOCS22} and its references). Another implication of \cref{thm:main}, by setting $\epsilon = o(1)$, is that no such $\poly(\Delta, \log n)$ time LCA exists for computing a $(1, o(n))$-approximate maximum matching.

\paragraph{Implications for Sublinear-Time Algorithms:} Our construction also has implications in the sublinear time model, where the algorithm is provided adjacency list access to the graph and is only required to return an estimate of the size of maximum matching.

In this sublinear time model, \citet*{YoshidaYISTOC09} showed there exists an  $\Delta^{O(1/\epsilon^2)}/\epsilon^2$ time algorithm providing a $(1, \epsilon n)$-approximation of maximum matching size. Whether there exists a $\poly(\Delta/\epsilon)$ time algorithm has remained open for more than a decade. See, in particular, Problem 39 on sublinear.info.\footnote{\url{https://sublinear.info/index.php?title=Open_Problems:39}} Our next \cref{thm:sublinear} negatively resolves this question by showing that $\Delta^{\Omega(1/\epsilon)}$ time is necessary.

\begin{theorem}\label{thm:sublinear}
    Let $\epsilon \leq 0.01$. For any choice of $\log^4 n \leq \Delta \leq n^\epsilon$, there is an $n$-vertex bipartite graph $G$ of maximum degree $\Delta$ such that any randomized algorithm that with probability at least $0.51$ provides a $(1, \epsilon n)$-approximation for the size of the maximum matching in $G$ must make $\Delta^{\Omega(1/\varepsilon)}$ adjacency list queries.
\end{theorem}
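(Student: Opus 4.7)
The plan is to derive \cref{thm:sublinear} by a standard two-distribution reduction that reuses the hard-instance distribution underlying \cref{thm:main}. I will construct two distributions $\dist_\yes$ and $\dist_\no$ over $n$-vertex bipartite graphs of maximum degree $\Delta$ such that (a) with probability at least $0.99$ over $G\sim \dist_\yes$ we have $\mu(G) \ge M + \epsilon n$, and with probability at least $0.99$ over $G\sim \dist_\no$ we have $\mu(G) \le M - \epsilon n$, for some threshold $M=M(n,\Delta,\epsilon)$; and (b) for any (possibly adaptive) algorithm $\mathcal{A}$ making at most $q = \Delta^{\Omega(1/\epsilon)}$ adjacency-list queries, the total variation distance between $\mathcal{A}$'s transcript under $\dist_\yes$ and under $\dist_\no$ is at most $0.01$. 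Given (a) and (b), any $(1,\epsilon n)$-additive estimator $\widehat{\mu}$ succeeding with probability $0.51$ could be used to distinguish $\dist_\yes$ from $\dist_\no$ (output $\yes$ iff $\widehat{\mu} \ge M$) with probability $\ge 0.50 + \Omega(1)$, contradicting~(b).

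For~(a), I expect the hard instances behind \cref{thm:main} to decompose into $\Theta(n)$ essentially independent ``gadgets'', each of $O(1)$ size, whose contribution to $\mu(G)$ depends on a single hidden bit (the ``important edge'' whose identification is shown to require $\Delta^{\Omega(1/\epsilon)}$ queries in \cref{thm:main}). I form $\dist_\yes$ and $\dist_\no$ by biasing this hidden bit in opposite directions in each gadget, so that the expected per-gadget contribution differs by a constant. Since gadgets are (nearly) independent, a Chernoff/Azuma-type concentration bound yields the $\epsilon n$ gap between $\mu(G)$ under the two distributions with probability $1-o(1)$. A mild rescaling of the per-gadget bias lets us tune the gap to exactly $2\epsilon n$ while keeping the number of gadgets $\Theta(n)$.

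For~(b), the indistinguishability of $\dist_\yes$ and $\dist_\no$ is inherited from the correlation-decay / tree-of-queries argument that underlies \cref{thm:main}. Concretely, I set up a coupling of the two distributions in which the answers to the first $q$ queries agree as long as no query's $O(\log_\Delta q)$-radius exploration reaches the root of a gadget whose hidden bit has been flipped. Because the construction spreads gadget roots uniformly and because reaching a root requires the same sort of $\Delta^{\Omega(1/\epsilon)}$-sized exploration that \cref{thm:main} already rules out in the LCA setting, the total variation of the transcripts is $o(1)$ for $q = \Delta^{o(1/\epsilon)}$.

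The main obstacle will be step~(a): the proof of \cref{thm:main} only needs to show that a single vertex's matching status is hard to determine, so establishing a global, concentrated $\Theta(\epsilon n)$ gap between $\dist_\yes$ and $\dist_\no$ requires extracting and using sufficient independence across the $\Theta(n)$ gadgets. If the construction's gadgets exhibit long-range correlations (e.g.\ they share a common random permutation), I will need either a martingale-based concentration argument tailored to the coupling induced by the shared randomness, or to switch to the ``random vertex'' formulation already noted in the remark after \cref{thm:main}, reducing the problem to an expectation calculation: an estimator with additive error $\epsilon n$ for $\mu(G)/n$ directly yields an estimator with additive error $\epsilon$ for $\Pr[v\text{ matched}]$ over a uniformly random $v$, which by the LCA lower bound already requires $\Delta^{\Omega(1/\epsilon)}$ queries.
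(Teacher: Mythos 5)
Your two-distribution blueprint matches the paper's high-level shape (a $\yesdist$/$\nodist$ pair with a $\Theta(\epsilon n)$ matching gap plus an indistinguishability argument), but the way you propose to build the distributions and fill in the two steps diverges from the paper in ways that would not work.

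First, the paper's hard instance is not a union of $\Theta(n)$ independent $O(1)$-size gadgets with per-gadget hidden bits. It is a single global layered structure: $O(1/\epsilon)$ layers $A_i,B_i$ each of $\Theta(n)$ vertices, joined by nearly-$d$-regular random bipartite blocks and a hierarchy of delusive sets $D_i$ whose edges span \emph{all} higher layers. The only difference between $\yesdist$ and $\nodist$ is in the edge distribution incident to the top layer $A_{1/\epsilon},B_{1/\epsilon}$. Consequently step~(a) is not proved by concentration over independent gadgets; it follows deterministically (given a high-probability event about degrees) from a K\"onig / fractional-matching argument: in $\nodist$ the bipartite core minus the $D$ vertices has a small vertex cover ($\bigcup B^j_i$), so $\mu(G_\no)\le N/(2\epsilon)+\epsilon N$, whereas in $\yesdist$ the near-perfect matchings along each layer boundary plus the extra $A^1_{1/\epsilon}$--$A^2_{1/\epsilon}$ matching give $\mu(G_\yes)\ge(2/\epsilon+1-4\epsilon^2)N/4$. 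There is no way to ``bias a hidden bit per gadget'' in this construction, so your Chernoff/Azuma plan for~(a) does not apply.

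Second, step~(b) is not a black box inherited from \cref{thm:main}. In the paper, \cref{thm:sublinear} is proved \emph{first}, and \cref{thm:main} is derived from it by the standard sampling reduction (run the LCA on $\Theta(1/\epsilon^2)$ random vertices to get a size estimate). Your fallback in the last paragraph reverses this: an additive estimator of $\mu(G)/n$ does not give you a per-vertex estimator of $\Pr[v\text{ matched}]$, so one cannot deduce \cref{thm:sublinear} from \cref{thm:main}. You would instead have to re-prove the indistinguishability of $\yesdist$ and $\nodist$ from scratch, which is precisely the main technical content of the paper: reducing the algorithm's view to a label-guessing game on a Markovian tree (\cref{lem:rooted-forest}, \cref{cor:neighbors-prob}), showing any root of level $1/\epsilon$ is indistinguishable unless the subtree reaches $S$ while avoiding all ``mixer'' delusive vertices (\cref{lem:mixer-vertex-in-tree}, \cref{lem:same-tree-different-labels}), and finally arguing that the ``crucial'' edges distinguishing $\yesdist$ from $\nodist$ behave identically conditioned on no bad event (\cref{lem:main-lemma-forest}). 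None of this is captured by a coupling over gadget roots.
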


Compared to prior lower bounds, several substantially new ideas are needed in the proof of \cref{thm:main}. The main novelty of our proof is a new notion of {\em delusive vertices}. These are a total of $\Theta(\epsilon n)$ vertices in the graph decomposed into $O(1/\epsilon)$ levels of $O(\epsilon^2 n)$ vertices each that essentially do not participate in a $(1, \epsilon n)$-approximate maximum matching, but distinguishing them from those vertices that do participate in the matching turns out to require a large number of queries. We present a detailed overview of these delusive vertices and our techniques in \cref{sec:techniques}.


\paragraph{Paper Organization:}
We present a high-level overview of our lower bound in \cref{sec:techniques}. \cref{sec:prelim} overviews preliminaries, notation, and some basic tools from the literature. \cref{sec:input-distribution} formalizes the construction of our hard instance. \cref{sec:no-cycle} reduces the problem to a certain label guessing game on trees. Finally, in \cref{sec:indis-label-root,sec:indis-yes-no} we prove the lower bound on the complexity of this label guessing game, wrapping up our proof of \cref{thm:main}.



\section{A High-Level Overview of Our Lower Bound}\label{sec:techniques}

In this section, we present a high-level and informal overview of our lower bound of \cref{thm:main}, deferring the formal proofs to the forthcoming sections.

\subsection{The Input Graph}

We start by describing the input distribution. As the final construction might seem strange at the first glance, we present it step by step, gradually adding all the ingredients that are needed for the final proof. Note that the degree of construction outlined in the technical overview differs slightly from the actual construction, but this overview contains all the essential ideas.

\paragraph{Step 1 --- The Core:} The first step is simple and intuitive. The ``core'' of our input graph consists of a set $S$ of vertices of degree 1. The core, in addition, has $k = \Theta(1/\epsilon)$ vertex subsets $A_i, B_i$ for $i \in [k]$. There are two types of edges in the core as illustrated in \cref{fig:img1} for $k=3$. There are `dense blocks' of $d$-regular graphs between $A_i$ and $B_i$ for any $i \in [k]$. Additionally, there are `special edges' perfectly matching $S$ to $B_1$, $A_i$ to $B_{i+1}$ for any $i \in [k-1]$, and $A_k$ to $A_k$.

\begin{figure*}
    \centering
    \includegraphics[width=1\textwidth]{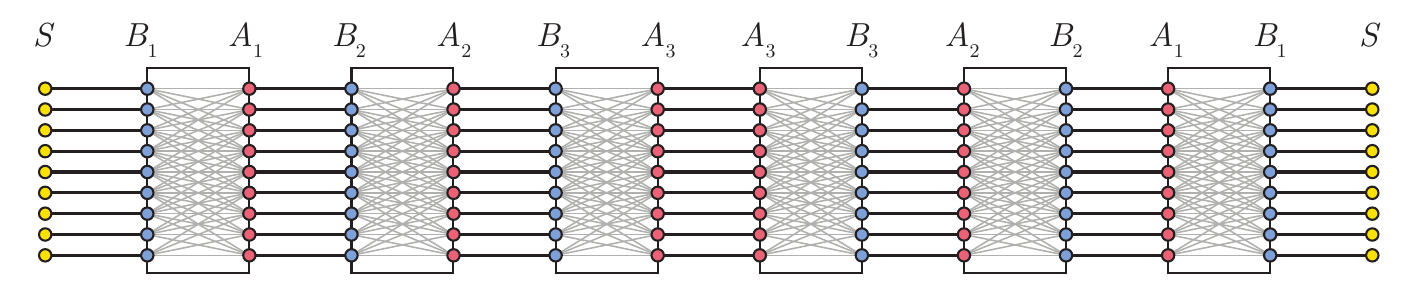}
    \caption{Core of the construction when $k = 3$.}
    \label{fig:img1}
\end{figure*}

Note that the special edges combined form a maximum matching of the core. Importantly, any $(\frac{2k + 1/2}{2k+1} \sim 1 - O(\epsilon))$-approximate maximum matching of the core must include a constant fraction of the special edges going from $A_k$ to $A_k$. Our goal is to hide these special edges and show that finding each one of them requires at least $d^{k-o(1)}$ queries to the graph. To do this, it is important {\em not} to give away the layer of a vertex. Towards this, our first idea is to assign a random ID to each of the vertices of the core and sort the adjacency lists randomly.

The nice thing about the core is that the local neighborhoods of all the vertices in higher levels are symmetric. In particular, it is not possible to distinguish an $A_k$ vertex $v$ from a $B_k$ vertex without reaching an $S$ vertex in its neighborhood, which are all at distance at least $2k$ from $v$. Note that while there are indeed $\Omega(d^{k})$ vertices in the $2k$-hop of a vertex $v \in A_k$, an LCA is not obligated to explore the whole $2k$-hop of $v$. In fact, a random walk starting from any vertex $v$ reaches an $S$ vertex in just $O_\epsilon(d)$ steps in expectation. Moreover, the distribution of the length of such a random walk until reaching $S$ (which can be approximated sufficiently well with some $O_\epsilon(\log n)$ repetitions) is enough to determine the layer of its starting vertex correctly with high probability. Therefore, we need more ideas to hide the layers of the core.

\paragraph{Step 2 --- Delusive Vertices:} Delusive vertices are a key component of our construction. Their main purpose is to guarantee what we showed the core cannot do on its own: hiding its layers. In our final construction, we will have a hierarchy of delusive vertices. But let us start with one level and see how it helps. We add a set $D$ of $\Theta(\frac{\epsilon}{1+\epsilon} n)$ delusive vertices to the graph. We connect every vertex in $\{A_i, B_i\}_{i \in [k]}$ to $\epsilon d$ delusive vertices in $D$.\footnote{We note that after connecting the $D$ vertices to all of $A_i, B_i$, the resulting graph will no longer be bipartite. Minor modifications will be needed to convert the graph into a bipartite one.} This can be done in a way such that all the $A_i, B_i, D$ vertices have the same degree $d' = d + \epsilon d + 1$ overall, and each vertex in $D$ has the same number of edges to all of the $A_i, B_i$ layers.

It turns out that adding these delusive vertices is enough to kill the random-walk based algorithm outlined above. Indeed, because $\epsilon$ fraction of neighbors of each $A_i, B_i$ vertex goes to $D$, the random walk is expected to hit $D$ every $\Theta(1/\epsilon)$ steps. As this is much smaller than the $\Omega(d)$ expected steps to hit an $S$ vertex, the random walk, w.h.p., sees a $D$ vertex before reaching $S$. On the other hand, the moment that we hit $D$, we completely lose information about where the random walk started. This is because conditioned on having reached a delusive vertex $u \in D$, all the layers have the same probability of being $u$'s predecessor in the walk as $u$ has the same degrees to all the layers.

While one layer of delusive vertices kills the random walk algorithm, it does not yet imply that $d^{k-o(1)}$ queries are needed for determining the label of $A_k$ vertices. In fact, it is still possible to determine the label of any vertex in just $\widetilde{O}(d^2)$ time! To see this, observe first that it is possible to determine whether a vertex is a $B_1$ vertex in $O(d)$ time by simply scanning its neighbors and checking whether there is an $S$ vertex among them. Now suppose that our task is to determine whether a vertex $v$ belongs to $D$. Since only the vertices in $D$ have $\epsilon$ fraction of their neighbors in $B_1$, we can random sample $\widetilde{O}(1)$ neighbors of $v$, check which ones belong to $B_1$, and report  $v \in D$ iff this fraction is sufficiently close to $\epsilon$. Now that we can check if a vertex belongs to $D$ in $\widetilde{O}(d)$ time, we can modify the random walk algorithm, ensuring that we never step on a $D$ vertex by running this test on each vertex that it visits. This only multiplies the running time of the random walk algorithm by a $\widetilde{O}(d)$ factor, thus it takes $\widetilde{O}(d^2)$ time to determine the core layers with one level of delusive vertices.

\paragraph{Step 3 --- A Hierarchy of Delusive Vertices:} In our final construction, instead of just a single layer of delusive vertices, we have a hierarchy of $k = \Theta(1/\epsilon)$ levels of delusive vertices $D_1, \ldots, D_k$. We ensure that the total number of vertices in $D_1, \ldots, D_k$ is $O(\epsilon^2 n)$ so that adding them to the graph does not drastically change the maximum matching of the core. As illustrated in \cref{fig:delusives}, for any $i$, vertices in $D_i$ are made adjacent to $A_j, B_j$ for all $j \geq i$ and to all $D_j$ for $j > i$. Intuitively, while we can still check whether $v \in D_1$ in $\widetilde{O}(d)$ time by examining what fraction of its neighbors belongs to $B_1$, the same cannot be done for $D_2, D_3, \ldots$ as they do not have any direct neighbors in $B_1$. In particular, determining whether a vertex $v$ belongs to $D_i$ (or even $A_i, B_i$) will require $d^{i - o(1)}$ queries in the neighborhood of $v$ which effectively hides the core layers.

\begin{figure*}
    \centering
    \includegraphics[width=1\textwidth]{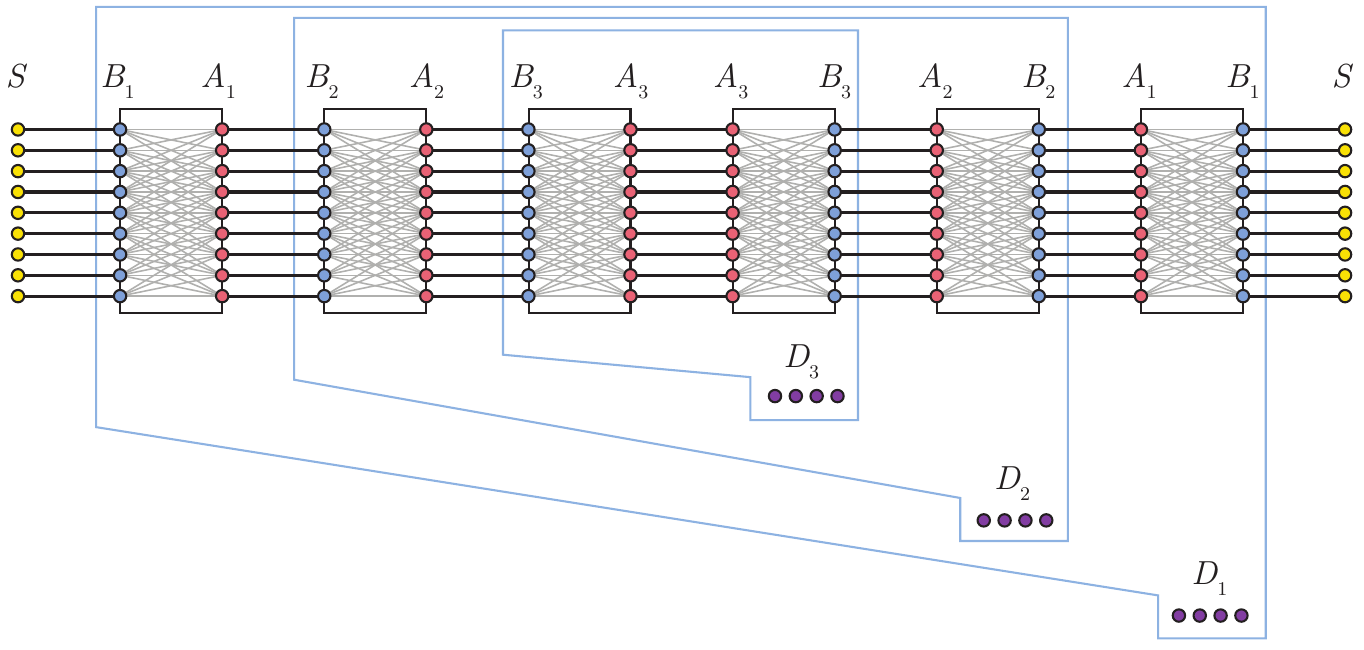}
    \caption{This figure shows how the $k$ levels of delusive vertices are made adjacent to the core. For simplicity, this figure does not show the edges of the delusive vertices, but all the edges of each $D_i$ vertex goes to the vertices in the smallest blue box enclosing it.}
    \label{fig:delusives}
\end{figure*}

\paragraph{Step 4 --- Binomial Degrees:} The $4^{\text{th}}$ and last step of our construction is more of a technical modification to the construction discussed above that is important for our proofs. In the graph illustrated above, each vertex has a fixed number of edges to every layer. Take a vertex $v \in B_1$ for example. It has one neighbor in $S$, $d$ neighbors in $A_1$, and $\epsilon d$ neighbors in $D_1$. In our final construction, we want every neighbor of $v \in B_1$ to belong to $A_1, D_1, S$ independently from the rest of neighbors of $v$. To achieve this, we first draw the number of edges of $v$ to each of $A_1, D_1, S$ from a suitable binomial distribution with the right expected value and then try to satisfy these drawn degrees. A challenge that arises is that the drawn degree sequences of all vertices might not be realizable simultaneously. For instance, if the sum of degrees of $B_1$ to $D_1$ is not the same as the sum of degrees from $D_1$ to $B_1$, then clearly the graph is not realizable. Nonetheless, we show that by modifying the drawn degrees of a small number of ``broken vertices'', the resulting degree sequence will be realizable using a theorem of Gale-Ryser (see \cref{pro:gale–ryser}). We also show that the algorithm will, w.h.p., never see a broken vertex. Effectively, this implies that the layers of the neighbors of any vertex that the algorithm sees will be independent.

\subsection{Formalizing the Lower Bound: The Label Guessing Game on Trees}\label{sec:techniques-label-guessing}

Up to this point, we have presented a high-level overview of our input graph and have also explained why a certain random-walk based algorithm cannot find a $(1, \epsilon n)$-approximate matching of it with less than $d^{\Omega(1/\epsilon)}$ queries. In this section, we overview how we prove this lower bound against all algorithms.

\paragraph{The Label Guessing Game on Trees:} We reduce our lower bound to a clean ``label guessing game'' on a Markovian tree (see \cref{fig:tree}). In this problem, we have a tree $T$ which initially only involves a single vertex $v$ that is going to be the root of $T$ throughout. At each step, the algorithm can adaptively pick a vertex $u \in T$ of its choice. Doing so will add a direct child below $u$. Each vertex added to $T$ will have a {\em hidden} label. The goal is to guess the label of the root vertex $v$ while querying a few vertices in its subtree. The hidden labels correspond to the vertex subsets of our input distribution. That is, each vertex has one label that is either $S$ or $A_i, B_i, D_i$ for some $i \in [k]$. The labels of the children of each vertex $u$ are drawn independently from a distribution that depends only on the label of their parent $u$. These transition probabilities come from our input distribution. For example, each $A_1$ vertex in our input graph has $d$ expected neighbors in $B_1$, $\epsilon d$ expected neighbors in $D_1$, and $\widetilde{O}(1)$ expected neighbors in $B_2$. Thus, once we open a child $w$ for a vertex $u$ whose hidden label is $A_1$, its child $w$ takes label $B_1$ with probability $1-\Theta(\epsilon)$, label $D_1$ with probability $\Theta(\epsilon)$, and takes label $B_2$ with probability $\widetilde{\Theta}(1/d)$ independently. The only information that the algorithm is given is whether the label of each vertex in the tree is $S$ or not. \cref{fig:tree} shows an instance of the label guessing game and two of its possible realizations.

\begin{figure*}
    \centering
    \includegraphics[width=1\textwidth]{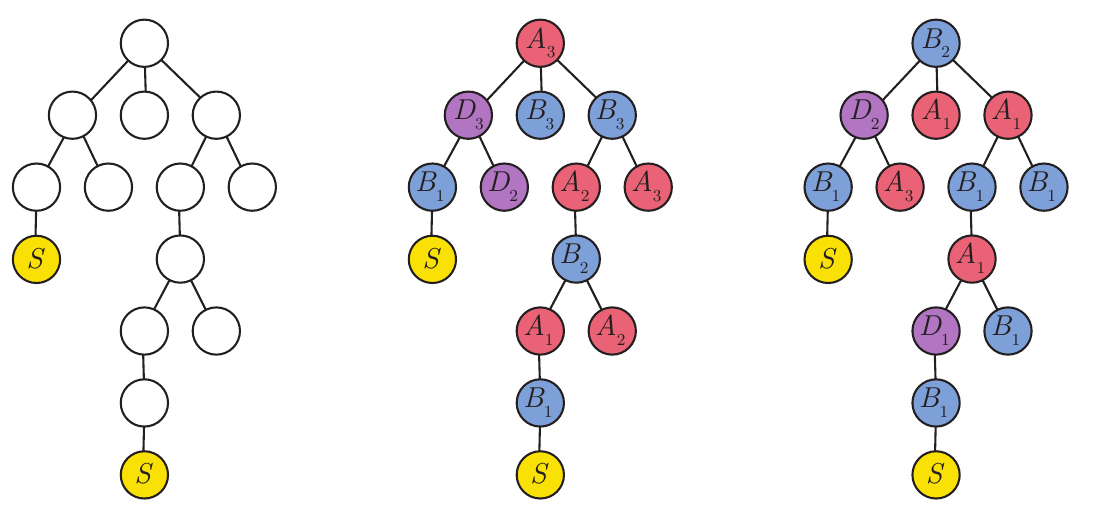}
    \caption{An example of the label guessing game. The tree on the left is what the algorithm sees. In particular, all the labels except for the $S$ labels are hidden from the algorithm. On the right, we have two possible realizations of the labels leading to the same observed tree. The algorithm must pick its queries in such a way that it can guess the label of the root.}
    \label{fig:tree}
\end{figure*}

\paragraph{The Reduction to the Label Guessing Game:} We show that any LCA for $(1, \epsilon n)$-approximate matching for our input construction leads to an efficient label guessing algorithm in the tree model. To show this, we prove that any LCA that queries $d^{O(1/\epsilon)}$ entries of the graph, with high probability, only sees a (rooted) forest. The proof 
relies heavily on the fact that the edges of the input graph $G$ are sufficiently random (even conditioned on satisfying the degree constraints and conditioned on the previous $d^{O(1/\epsilon)}$ queries) and thus expand well. Once we prove this, we are immediately done: conditioned on the high probability event that the LCA does not discover a cycle, the problem becomes exactly the same as the label guessing game.

\paragraph{Lower Bounds for the Label Guessing Game:} Lower bounding the number of queries needed to solve the label guess game is the crux of our analysis. Our proof consists of two parts. In the first part of the proof, we show that any algorithm that solves the label guessing game must find a path from the root to an $S$ vertex that does not go through a certain subset of delusive vertices that we call mixer vertices (\cref{def:mixer-vertex}). To formalize this, via a careful coupling argument, we show that if every path from the root to an $S$ vertex contains a mixer vertex, then the label of the root is equally likely to be, say, $A_k$ or $B_k$. In the second part of the proof, we prove that to discover a path from root of level $k$ to $S$ that does not contain any mixer vertex, the subtree below the root must include at least $d^{k- o(1)}$ vertices. The proof of this is close (but more general) than the arguments we discussed above for why the random-walk based algorithm does not work.


\section{Preliminaries}\label{sec:prelim}

\paragraph{Notation:} In this paper, we let $G = (V, E)$ be the input graph, $n$ to be the number of vertices, $\Delta$ to be the maximum degree of the graph, and $Q$ be the number of queries that the algorithm makes. Moreover, we use $\widetilde{O}(1)$ to hide $\polylog Q$ factors.

\paragraph{Probabilistic tools:} We use the following standard form of Chernoff bound in our paper.

\begin{proposition}[Chernoff Bound]\label{prop:chernoff}
    Let $X_1, X_2, \ldots, X_k$ be independent Bernoulli random variables, and let $X = \sum_{i=1}^{k} X_i$. Then, for any $\delta > 0$, 
    $\Pr[|\E[X] - X| \geq \delta] \leq 2\exp\left(-\frac{\delta^2}{3\E[X]}\right).$
\end{proposition}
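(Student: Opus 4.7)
The plan is to prove this via the standard Chernoff (exponential moment) method, applied symmetrically to both tails. Since the stated form is additive, I would first note that it is a rewriting of the usual multiplicative Chernoff bound: substituting $\delta' = \delta/\E[X]$ into the familiar bound $\Pr[|X-\mu| \geq \delta' \mu] \leq 2\exp(-(\delta')^2 \mu/3)$ (valid for $0 < \delta' \leq 1$) gives exactly the displayed inequality. The regime $\delta > \E[X]$ (i.e.\ $\delta' > 1$) can be handled separately by either invoking the stronger multiplicative bound or by observing that the inequality becomes weaker and follows from the $\delta = \E[X]$ case.

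For the upper tail, I would carry out the standard Markov-on-the-MGF argument. Fix $t > 0$ and write
\[
\Pr[X \geq \mu + \delta] = \Pr\!\bigl[e^{tX} \geq e^{t(\mu+\delta)}\bigr] \leq e^{-t(\mu+\delta)}\,\E[e^{tX}].
\]
Independence of the $X_i$ gives $\E[e^{tX}] = \prod_{i} \E[e^{tX_i}]$, and for a Bernoulli $X_i$ with parameter $p_i$,
\[
\E[e^{tX_i}] = 1 + p_i(e^t - 1) \leq \exp\!\bigl(p_i(e^t-1)\bigr),
\]
using $1+x \leq e^x$. Multiplying and letting $\mu = \sum_i p_i$, we obtain $\E[e^{tX}] \leq \exp(\mu(e^t - 1))$. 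Choosing $t = \ln(1 + \delta/\mu)$ and simplifying via the standard inequality $(1+x)\ln(1+x) - x \geq x^2/3$ for $0 \leq x \leq 1$ yields $\Pr[X \geq \mu + \delta] \leq \exp(-\delta^2/(3\mu))$.

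For the lower tail, I would apply the same method to $-X$: for $t > 0$,
\[
\Pr[X \leq \mu - \delta] \leq e^{t(\mu-\delta)} \prod_i \E[e^{-tX_i}] \leq \exp\!\bigl(-t(\mu-\delta) + \mu(e^{-t}-1)\bigr),
\]
and optimizing over $t$ together with the inequality $(1-x)\ln(1-x) + x \geq x^2/3$ for $0 \leq x \leq 1$ produces the same exponential bound. A union bound over the two tails contributes the factor of $2$ in front of the exponential. Since each step is a textbook manipulation, there is no substantive obstacle; the only care needed is the elementary inequality $(1\pm x)\ln(1\pm x) \mp x \geq x^2/3$, which I would establish by a short Taylor-series computation on $[0,1]$. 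The result then matches the statement exactly.
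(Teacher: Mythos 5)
The paper states \cref{prop:chernoff} without proof, as a standard preliminary tool, so there is no in-paper argument to compare against; your task here was essentially to reconstruct the textbook derivation, and you have done that correctly via the usual exponential-moment method (Markov on $e^{tX}$, independence, $1+x\leq e^x$, optimize $t$, then the elementary inequality $(1+x)\ln(1+x)-x\geq x^2/3$ on $[0,1]$).

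One step is imprecise and worth flagging: your handling of the regime $\delta > \E[X]$. You write that the inequality ``becomes weaker and follows from the $\delta=\E[X]$ case,'' but that is not how one-sided tail bounds behave --- as $\delta$ grows the right-hand side $2\exp(-\delta^2/(3\mu))$ shrinks faster than what the Chernoff exponent $\mu\bigl[\delta/\mu-(1+\delta/\mu)\ln(1+\delta/\mu)\bigr]$ actually provides, and indeed the auxiliary inequality $(1+x)\ln(1+x)-x\geq x^2/3$ fails once $x$ exceeds roughly $1.7$ (check $f(x)=(1+x)\ln(1+x)-x-x^2/3$: $f''<0$ for $x>1/2$ and $f(2)<0$). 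So the proposition, read literally with ``for any $\delta>0$,'' is not quite correct for the upper tail with $\delta$ a few multiples of $\mu$; the usual fix is to state the bound for $\delta\leq\mu$, or to use the exponent $\delta^2/(2\mu+\delta)$. This has no bearing on the paper, which only invokes the bound with $\delta=O(\sqrt{\mu}\cdot\polylog n)\ll\mu$, but your closing claim that ``the result then matches the statement exactly'' should be tempered accordingly. The lower-tail bound, by contrast, is fine for all $\delta$, since it is vacuous once $\delta>\mu$.
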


\paragraph{Graph theory definitions/tools:}  We use $\mu(G)$ to denote the size of the maximum matching of graph $G$, $\nu(G)$ to denote the size of the vertex cover of graph $G$.

\begin{proposition}[K\"{o}nig’s Theorem]\label{prop:konig}
For any bipartite graph $G$, it holds $\mu(G) = \nu(G)$.
\end{proposition}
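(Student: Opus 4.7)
The plan is to prove the two inequalities $\mu(G) \le \nu(G)$ and $\nu(G) \le \mu(G)$ separately. The first is essentially trivial: any vertex cover $C$ must contain at least one endpoint of every matching edge, and since the edges of a matching are vertex-disjoint, distinct matching edges contribute distinct vertices of $C$; hence $|M| \le |C|$ for any matching $M$ and cover $C$, which gives $\mu(G) \le \nu(G)$. The substantive direction is the reverse inequality, and I would establish it by exhibiting, given any maximum matching $M$, an explicit vertex cover of size $|M|$.

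Write $G=(L\cup R,E)$ with bipartition $L,R$. Fix a maximum matching $M$ and let $U\subseteq L$ be the set of $M$-unmatched vertices in $L$. Define $Z$ to be the set of all vertices reachable from $U$ by $M$-alternating paths (paths whose edges alternate between $E\setminus M$ and $M$, starting with a non-matching edge from $U$). Then set
\[
C \;=\; (L\setminus Z)\;\cup\;(R\cap Z).
\]
The plan is to verify two things: (i) $C$ is a vertex cover, and (ii) $|C|=|M|$.

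For (i), I would argue by contradiction: suppose some edge $e=\{u,v\}$ with $u\in L$, $v\in R$ is uncovered, so $u\in Z$ and $v\notin Z$. By definition of $Z$ there is an alternating path $P$ from some $u_0\in U$ to $u$. If $e\in M$ then $P$ concatenated with the edge $e$ shows $v\in Z$, a contradiction. If $e\notin M$ then $P$ followed by $e$ is still alternating and ends at $v\in R$, again placing $v\in Z$, contradiction. Hence every edge is covered. For (ii), the key observations are: every vertex in $L\setminus Z$ is matched (since $U\subseteq Z$), and every vertex in $R\cap Z$ is matched (otherwise an alternating path to it would be an $M$-augmenting path, contradicting maximality of $M$ by Berge's lemma). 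Moreover, for any $v\in R\cap Z$, its unique $M$-partner must lie in $Z$ (the alternating path to $v$ must end with a non-matching edge, forcing the subsequent matching edge to reach a vertex in $L\cap Z$), and hence that partner is \emph{not} in $L\setminus Z$. Thus $M$ restricted to $C$ defines an injection from $C$ into $M$, giving $|C|\le |M|$; combined with (i) and the easy direction, this forces $|C|=|M|=\nu(G)=\mu(G)$.

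The only delicate step is the second bullet of (ii)---showing $M$ injects $C$ into $M$ without collisions---which hinges on the fact that the matched partner of each $v\in R\cap Z$ lies in $L\cap Z$ rather than $L\setminus Z$, so the $L$-endpoints and $R$-endpoints of matching edges we pick for vertices of $C$ never clash. I don't anticipate a real obstacle here; the proof is classical and the alternating-path analysis above handles it cleanly. (An alternative route would be to reduce to max-flow/min-cut on the standard $s$-$L$-$R$-$t$ network and invoke integrality of the min-cut, but the augmenting-path proof above is more self-contained and requires no additional machinery.)
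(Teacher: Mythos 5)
Your proof is correct: it is the standard alternating-path proof of K\H{o}nig's theorem (reachability set $Z$ from unmatched $L$-vertices, cover $C=(L\setminus Z)\cup(R\cap Z)$, injection of $C$ into $M$). The paper states this proposition as a classical tool without proof, so there is no in-paper argument to compare against; your argument is the canonical one and is sound.
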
 

\paragraph{Bigraphic pairs of sequences:}  We use the following results on {\em bigraphic pairs of sequences} defined below.

\begin{definition}[Bigraphic Pairs of Sequences]\label{def:bigarphic}
Let $a = (a_1, a_2, \ldots a_n)$ and $b = (b_1, b_2, \ldots, b_m)$ be two sequences of non-negative integers. We say this is a bigraphic pair of sequences if there exists a bipartite graph where $a$ corresponds to the degree sequence of one part of the graph and $b$ corresponds to the degree sequence of the other part.
\end{definition}

\begin{proposition}[Gale–Ryser Theorem]\label{pro:gale–ryser}
    Let $(a_1, a_2, \ldots a_n)$ and $(b_1, b_2, \ldots, b_m)$ be two sequences of non-negative integers such that $a_1 \geq a_2 \geq \ldots \geq a_n$. Then, these two sequences are bigraphic if and only if $\sum_{i=1}^n a_i = \sum_{i=1}^m b_i$, and
    \begin{align*}
        \sum_{i=1}^r a_i \leq \sum_{i=1}^m \min(b_i, r) \qquad \text{for all $1\leq r\leq n$.}
    \end{align*}
\end{proposition}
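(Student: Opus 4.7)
The plan is to establish the two directions separately. Necessity follows from a straightforward double-counting argument, while sufficiency I will prove by induction on the total degree $\sum_i a_i$ using a greedy ``highest-degree-first'' construction in $W$.

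For \textbf{necessity}, suppose a bipartite graph $G$ on vertex classes $U=(u_1,\dots,u_n)$ and $W=(w_1,\dots,w_m)$ realizes $(a,b)$. Double-counting the edges of $G$ yields $\sum_i a_i = |E(G)| = \sum_j b_j$. For the threshold inequality at level $r$, fix $U_r := \{u_1,\dots,u_r\}$ and count the edges incident to $U_r$ in two ways: from the $U$-side this total is $\sum_{i=1}^{r} a_i$, while each $w_j\in W$ contributes at most $\min(b_j, r)$ edges to $U_r$ (capped by its own degree and by $|U_r|=r$), so $\sum_{i=1}^{r} a_i \le \sum_{j=1}^{m}\min(b_j,r)$.

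For \textbf{sufficiency}, I induct on $\sum_i a_i$; the base case $\sum_i a_i = 0$ is the empty bipartite graph. For the inductive step, reorder $W$ so that $b_1 \ge b_2 \ge \dots \ge b_m$, set $t := a_1$, and make $u_1$ adjacent precisely to the $t$ vertices $w_1,\dots,w_t$ of largest degree in $W$. Let $a' := (a_2,\dots,a_n)$ and let $b'$ be obtained from $b$ by subtracting $1$ from each of $b_1,\dots,b_t$. If $(a',b')$ still satisfies the Gale--Ryser conditions, then the inductive hypothesis provides a realization $H$ of $(a',b')$, and adjoining $H$ to the star centered at $u_1$ produces a realization of $(a,b)$.

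The \textbf{main obstacle} is to verify that $(a',b')$ satisfies the Gale--Ryser conditions. The equality $\sum_i a'_i = \sum_j b'_j$ is immediate since $t=a_1$. For the threshold inequalities, I would fix $r\ge 1$ and compare $\sum_{i=1}^{r} a'_i = \sum_{i=2}^{r+1} a_i$ to $\sum_j \min(b'_j, r)$, splitting into cases by how $r$ interacts with the sorted sequence $b$ near index $t$. The delicate case is when several $b_j$ with $j\le t$ equal $r$ and drop to $r-1$ under the reduction, which could make the right-hand side decrease by more than the left-hand side does; here the greedy choice of connecting $u_1$ to the $t$ largest $b_j$'s, combined with the original Gale--Ryser inequality applied at threshold $r+1$, is exactly what closes the gap. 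If this bookkeeping becomes unwieldy, a clean alternative I would fall back on is to model the problem as a max-flow instance (source $\to u_i$ with capacity $a_i$, $u_i\to w_j$ with capacity $1$, $w_j\to$ sink with capacity $b_j$) and observe via max-flow/min-cut that the Gale--Ryser conditions are precisely the constraints certifying that the min-cut value equals $\sum_i a_i$, i.e., that an integral flow saturating the source exists.
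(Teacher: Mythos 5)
The paper cites the Gale--Ryser theorem as a classical result in the preliminaries and provides no proof, so there is no proof of the paper's own to compare against; I'll evaluate your argument on its merits. Your necessity argument is correct. For sufficiency, the greedy induction on $\sum_i a_i$ does work, and the bookkeeping you deferred closes exactly, confirming your hint. Sorting $b$ decreasingly and writing $\sigma := |\{j : b_j \ge r+1\}|$: if $\sigma \ge t$, decrementing $b_1, \dots, b_t$ leaves $\sum_j \min(b_j,r)$ unchanged, and the original inequality at threshold $r$ together with $a_{r+1} \le a_1$ gives the reduced inequality; if $\sigma < t$, then $\sum_j \min(b'_j,r) = \sum_j \min(b_j,r) - (t - \sigma)$, and the original inequality at threshold $r+1$ has right-hand side $\sigma(r+1) + \sum_{b_j \le r} b_j$, so subtracting $a_1 = t$ yields exactly $\sigma r + \sum_{b_j\le r} b_j - (t - \sigma)$, i.e., the slack matches with equality. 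One point you should make explicit: the $r=1$ inequality, $a_1 \le |\{j : b_j \ge 1\}|$, is what guarantees $t \le m$ and $b_1, \dots, b_t \ge 1$, so the greedy step is well-defined and $b'$ stays nonnegative. Your max-flow/min-cut fallback is also a complete and standard proof; either route is fine, with the flow argument trading the index bookkeeping for an appeal to integrality of max flow.
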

\section{Input Distribution and its Characteristics}\label{sec:input-distribution}

In this section, we describe the input distribution of our construction. We have two types of input graphs where the first graph has an almost perfect matching and for the second graph, only $(1-\epsilon)$ fraction of vertices are matched in the maximum matching. We prove that any {\em deterministic} LCA which with probability at least 0.51 computes a $(1, \epsilon n)$-approximate maximum matching of graphs drawn from this distribution, must spend at least $\Delta^{\Omega(1/\epsilon)}$ time. From Yao's minimax theorem \cite{Yao77}, we thus get that any {\em randomized} LCA that computes a $(1, \epsilon n)$-approximate matching for all inputs with success probability at least 0.51 must also spend at least $\Delta^{\Omega(1/\epsilon)}$ time per query.

Let $N$ be a parameter that controls the number of vertices in our input distribution. Moreover, in our construction, let $d \leq n^{\epsilon/3}$ be a parameter that controls the degree of vertices. Graphs in our input distribution have $n = (1/2 + 1/\epsilon + \epsilon - \epsilon^2 / 2)N$ vertices. We first describe the vertex set of the graphs in our distribution.

\paragraph{The vertex set:} The vertex set consists of disjoint subsets $A^1_i, B^1_i, A^2_i, B^2_i$ for each $i \in [1/\epsilon]$  as well as two subsets $S^1$ and $S^2$. Each of these subsets except $A^1_{1/\epsilon}, A^2_{1/\epsilon}, S^1$, and $S^2$, has exactly $N/4$ vertices. Each of $A^1_{1/\epsilon}$ and $A^2_{1/\epsilon}$ has $(1-\epsilon^2 )N/4$ vertices. Also, each of $S^1$ and $S^2$ has $N/4$ vertices. Moreover, the vertex set consists of subsets $D_i$ of {\em delusive vertices} for $i \in [1/\epsilon]$, where each of these $1/\epsilon$ subsets has exactly $\epsilon^2 N$ vertices. So we have
\begin{align*}
    |A^1_i| = |A^2_i| = |B^1_i| = |B^2_i| = \frac{N}{4} \qquad \forall i \in [\frac{1}{\epsilon} - 1],
\end{align*}
\vspace{-0.5cm}
\begin{align*}
    |B^1_{1/\epsilon}| = |B^2_{1/\epsilon}| = \frac{N}{4}, \quad |A^1_{1/ \epsilon}| = |A^2_{1/ \epsilon}| = (1-\epsilon^2)\frac{N}{4},
\end{align*}
\vspace{-0.5cm}
\begin{align*}
    |S^1| = |S^2| = \frac{N}{4},
\end{align*}
\vspace{-0.5cm}
\begin{align*}
    |D_i| = \epsilon^2 N \quad \forall i \in [\frac{1}{\epsilon}].
\end{align*}
Hence, the total number of vertices in each graph of our input distribution is $n = (1/2 + 1/\epsilon + \epsilon - \epsilon^2 / 2)N$.

\paragraph{The edge set:} For the edge set, we have two different distributions; \yesdist{} and \nodist{}. In \yesdist{}, the graph has an almost perfect matching. On the flip side, a maximum matching of \nodist{} leaves at least $\epsilon n$ vertices unmatched. In our input distribution, we draw the graph from \yesdist{} with probability $1/2$ and from \nodist{} with probability $1/2$.

Let $X, Y \in \bigcup_{i=1}^{1/\epsilon} \{A^1_i, A^2_i, B^1_i,B^2_i, D_i \} \cup \{S^1, S^2\}$ be any two vertex subsets. We use $\deg_X^Y(v)$ to denote the number of vertices of subset $Y$ that are adjacent to a single vertex $v \in X$. First, we show how the degree of vertices will be determined in \yesdist{} and \nodist{}, then we describe how to construct a graph with the corresponding degree sequence. Each vertex except vertices of $S$ has exactly $d' = d + \epsilon^3 d + \log^4 N$ neighbors. Also, all vertices of $S$ have $\log^4 N$ neighbors. For a vertex $u \in X$, the type of its neighbor $v \in Y$ is determined independently at random according to the following binomial distribution for both \yesdist{} and \nodist{} (it helps to recall \cref{fig:delusives} of \cref{sec:techniques}):

\begin{itemize}
    \item Vertices of $S^j$ have $\log^4 N$ neighbors and the neighbors only can be $B^j_1$ for $j \in \{1, 2\}$.
    \item If $X = B^j_1$ for $j \in \{1, 2\}$:
    \begin{align*}
        \Pr[Y = S^j] = \frac{\log^4 N}{d'}, \quad \Pr[Y = A^j_1] = \frac{d}{d'},
    \end{align*}
    \vspace{-0.5cm}
    \begin{align*}
        \Pr[Y = D_1] = \frac{\epsilon^3 d}{d'},
    \end{align*}
    \item If $X = B^j_i$ for $j\in \{1, 2\}$ and $1 < i < 1/\epsilon$:
    \begin{align*}
        \Pr[Y = A^j_{i-1}] = \frac{\log^4 N}{d'}, \quad &\Pr[Y = A^j_i] = \frac{d}{d'},
    \end{align*}
        \vspace{-0.5cm}
    \begin{align*}
        \Pr[Y = D_i] = \frac{(1/\epsilon - i + 1) \epsilon^4 d}{d'},
    \end{align*}
        \vspace{-0.5cm}
    \begin{align*}
        \Pr[Y = D_k] = \frac{\epsilon^4 d}{d'} \quad \text{for $k < i$}.
    \end{align*}

    \item If $X = A^j_i$ for $j \in \{1, 2\}$ and $1 \leq i < 1/\epsilon$:
    \begin{align*}
        \Pr[Y = B^j_{i+1}] = \frac{\log^4 N}{d'}, \quad &\Pr[Y = B^j_i] = \frac{d}{d'},
    \end{align*}
    \vspace{-0.5cm}
    \begin{align*}
        \Pr[Y = D_i] = \frac{(1/\epsilon - i + 1) \epsilon^4 d}{d'},
    \end{align*}
    \vspace{-0.5cm}
    \begin{align*}
        \Pr[Y = D_k] = \frac{\epsilon^4 d}{d'} \quad \text{for $k < i$}.
    \end{align*}
    \item If $X = D_i$ for $i \in [1/\epsilon - 1]$:
    \begin{align*}
        &\Pr[Y = D_i] = \frac{(1-2\epsilon + 2i\epsilon^2 - 5\epsilon^2 / 2 + 3\epsilon^4)d + \log^4 N}{d'},
    \end{align*}
    \vspace{-0.5cm}
    \begin{align*}
        \Pr[Y = D_j] = \frac{\epsilon^4 d}{d'} \quad \text{for $j \neq i$},
    \end{align*}
    \vspace{-0.5cm}
    \begin{align*}
        \Pr[Y = A^j_{1/\epsilon}] = \frac{(\epsilon^2 - \epsilon^4)d}{d'}  \quad \text{for $j \neq i$},
    \end{align*}
    \vspace{-0.5cm}
    \begin{align*}
        \Pr[Y = A^j_i] = \Pr[Y = B^j_i] &= \frac{(1/\epsilon - i + 1)\cdot \epsilon^2 d/4}{d'}\\ 
        &\text{for $j \in \{1, 2\}$},
    \end{align*}
    \vspace{-0.5cm}
    \begin{align*}
        \Pr[Y = A^j_k] &= \Pr[Y = B^j_k] = \Pr[Y = B^j_{1/\epsilon}] = \frac{\epsilon^2 d / 4}{d'}\\ &\text{for $j \in \{1, 2\}$ and $i < k < 1/\epsilon$}.
    \end{align*}
    \item If $X = D_i$ for $i = 1/\epsilon$:
    \begin{align*}
        &\Pr[Y = D_i] = \frac{(1 - 5\epsilon^2 / 2 + 3\epsilon^4)d + \log^4 N}{d'},
    \end{align*}
    \vspace{-0.5cm}
    \begin{align*}
        \Pr[Y = D_j] = \frac{\epsilon^4 d}{d'} \quad \text{for $j \neq i$},
    \end{align*}
    \vspace{-0.5cm}
    \begin{align*}
        \Pr[Y = A^j_{1/\epsilon}] = \frac{(\epsilon^2 - \epsilon^4)d}{d'} \quad \text{for $j \in \{1, 2\}$},
    \end{align*}
    \vspace{-0.5cm}
    \begin{align*}
        \Pr[Y = B^j_{1/\epsilon}] = \frac{\epsilon^2 d/4}{d'} \quad \text{for $j \in \{1, 2\}$}.
    \end{align*}
\end{itemize}

Distribution of neighbors of vertices in $A^j_{1/\epsilon}$ and $B^j_{1/\epsilon}$ for $j \in \{1, 2\}$ is different in \yesdist{} and \nodist{}. The following binomial distribution is the distribution of neighbors in \yesdist{}:

\begin{itemize}
        \item If $X = B^j_{1/\epsilon}$ for $j \in \{1, 2\}$:
    \begin{align*}
        \Pr[Y = A^j_{1/\epsilon-1}] = \frac{\log^4 N}{d'}, \quad \Pr[Y = A^j_{1/\epsilon}] = \frac{(1-\epsilon^2)d}{d'},
    \end{align*}
        \vspace{-0.5cm}
    \begin{align*}
        \Pr[Y = B^{3 - j}_{1/\epsilon}] = \frac{\epsilon^2 d}{d'},
    \end{align*}
        \vspace{-0.5cm}
    \begin{align*}
        \Pr[Y = D_k] = \frac{\epsilon^4 d}{d'} \quad \text{for $k \leq 1/\epsilon$}.
    \end{align*}
    \item If $X = A^j_{1/\epsilon}$ for $j \in \{1, 2\}$:
    \begin{align*}
        \Pr[Y = A^{3-j}_{1/\epsilon}] = \frac{\log^4 N}{d'}, \quad \Pr[Y = B^j_{1/\epsilon}] = \frac{d}{d'},
    \end{align*}
            \vspace{-0.5cm}
    \begin{align*}
        \Pr[Y = D_k] = \frac{\epsilon^4 d}{d'} \quad \text{for $k \leq 1/\epsilon$}.
    \end{align*}
    \end{itemize}

        The following binomial distribution is the distribution of neighbors in \nodist{}:
    \begin{itemize}
        \item If $X = B^j_{1/\epsilon}$ for $j \in \{1, 2\}$:
    \begin{align*}
        &\Pr[Y = A^j_{1/\epsilon-1}] = \frac{\log^4 N}{d'},
    \end{align*}
    \vspace{-0.5cm}
    \begin{align*}
        \Pr[Y = A^j_{1/\epsilon}] = \frac{(1-\epsilon^2)(d+\log^4 N)}{d'},
    \end{align*}
    \vspace{-0.5cm}
    \begin{align*}
        \Pr[Y = B^{3 - j}_{1/\epsilon}] = \frac{\epsilon^2 (d+\log^4 N) - \log^4 N}{d'},
    \end{align*}
    \vspace{-0.5cm}
    \begin{align*}
        \Pr[Y = D_k] = \frac{\epsilon^4 d}{d'} \quad \text{for $k \leq 1/\epsilon$}.
    \end{align*}
    \item If $X = A^j_{1/\epsilon}$ for $j \in \{1, 2\}$:
    \begin{align*}
       \Pr[Y = B^j_{1/\epsilon}] = \frac{d+ \log^4 N}{d'},
    \end{align*}
    \vspace{-0.5cm}
    \begin{align*}
        \Pr[Y = D_k] = \frac{\epsilon^4 d}{d'} \quad \text{for $k \leq 1/\epsilon$}.
    \end{align*}
\end{itemize}

Note that for two different subsets $X$ and $Y$, we only described how to determine $\deg_X^Y(v)$. Unfortunately, it may not be possible to construct a graph with the resulting degree sequence. As described in \cref{sec:techniques}, if the sum of degrees from $A^1_1$ to $B^1_1$ is different from the sum of degrees from $B^1_1$ to $A^1_1$, then no graph can satisfy this degree sequence. Nonetheless, we prove that by ignoring the degrees of at most $O(\sqrt{nd}\cdot \log n)$ vertices, which we call {\em broken vertices}, the degrees of the rest of the vertices can be satisfied with high probability. The following \cref{lem:convert-degree-to-edge} is useful in showing how we add the edges according to the degree sequence.

\begin{definition}[Broken Vertices]
    Take a vertex $v$ in our input graph. We say $v$ is a broken vertex if its degree in the final graph is different from the degree initially drawn from the binomial distribution.
\end{definition}

\begin{lemma}\label{lem:convert-degree-to-edge}
Let $a_1 \geq a_2 \geq \ldots \geq a_{k_1}$ and $b_1 \geq b_2 \geq \ldots \geq b_{k_2}$ be two sequences of non-negative integers where $a_i$ is drawn from a Binomial distribution with $\eta_1$ trials and success probability $\rho_1$, and $b_i$ is drawn from a Binomial distribution with $\eta_2$ trials and success probability $\rho_2$ for all $i$, and suppose $\sum_i^{k_1}a_i \geq \sum_i^{k_2}b_i$. Also, assume that $k_1\eta_1 \rho_1 = k_2\eta_2 \rho_2$, $\eta_1 \rho_1 = \Omega(\log^4 n)$, $\eta_2 \rho_2 = \Omega(\log^4 n)$, $\eta_1 \rho_1 = O(d)$, $\eta_2 \rho_2 = O(d)$, $k_1 = \Theta(n)$, and $k_2 = \Theta(n)$. Then, with high probability, there exists a sequence of non-negative integers $a'_1 \geq a'_2 \geq \ldots \geq a'_{k_1}$ such that all the following hold:
\begin{itemize}
    \item $0 \leq a_i - a'_i \leq 10\sqrt{\eta_1 \rho_1} \log n$ for all $i$,
    \item $(a'_1, a'_2, \ldots, a'_{k_1})$ and $(b_1, b_2, \ldots, b_{k_2})$ is a bigraphic pair of sequences (see \cref{def:bigarphic}), and
    \item there are at most $O(\sqrt{k_1 \eta_1 \rho_1}\cdot \log n)$ elements in the sequence where $a_i \neq a_i'$.
\end{itemize}
\end{lemma}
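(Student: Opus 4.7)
The plan is to apply Chernoff concentration (\cref{prop:chernoff}) to the coordinates of $a,b$ and to their sums, construct $a'$ by subtracting $1$ from the $t:=\sum_i a_i-\sum_j b_j$ smallest entries of $a$, and then verify the Gale--Ryser condition (\cref{pro:gale–ryser}) by a short case analysis on $r$.

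Write $\mu_1=\eta_1\rho_1$ and $\mu_2=\eta_2\rho_2$. Applying \cref{prop:chernoff} to each coordinate and taking a union bound over the $k_1+k_2=\Theta(n)$ entries, for a large enough absolute constant $C$ every $a_i$ lies in $[\mu_1-C\sqrt{\mu_1\log n},\,\mu_1+C\sqrt{\mu_1\log n}]$ and every $b_j$ in $[\mu_2-C\sqrt{\mu_2\log n},\,\mu_2+C\sqrt{\mu_2\log n}]$ with high probability (the concentration kicks in because $\mu_1,\mu_2=\Omega(\log^4 n)$). Applying \cref{prop:chernoff} once more to the two sums, both $\sum_i a_i$ and $\sum_j b_j$ lie within $O(\sqrt{k_1\mu_1\log n})$ of their common mean $k_1\mu_1=k_2\mu_2$, so $t\in[0,\,O(\sqrt{k_1\mu_1}\log n)]$ with high probability.

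Now define $a'_i:=a_i$ for $i\le k_1-t$ and $a'_i:=a_i-1$ for $i>k_1-t$. Then $\sum a'_i=\sum a_i-t=\sum b_j$, exactly $t=O(\sqrt{k_1\mu_1}\log n)$ coordinates are modified, and each change is by $1\le 10\sqrt{\mu_1}\log n$. The sequence remains non-increasing because we decrement only a suffix by $1$ (order is preserved internally, and at the boundary $a_{k_1-t}\ge a_{k_1-t+1}\ge a'_{k_1-t+1}$), and nonnegativity follows from $a_{k_1}\ge \mu_1-o(\mu_1)=\Omega(\log^4 n)\gg 1$.

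It remains to verify $\sum_{i=1}^r a'_i\le \sum_{j=1}^{k_2}\min(b_j,r)$ for every $r\in[k_1]$, which I would split on where $r$ sits relative to the concentration window of the $b_j$'s. If $r\ge\mu_2+C\sqrt{\mu_2\log n}$, every $b_j\le r$, so the right side equals $\sum b_j=\sum a'_i$ and trivially dominates the left side. If $r\le\mu_2-C\sqrt{\mu_2\log n}$, every $b_j\ge r$, so the right side is $k_2 r$ while the left side is at most $r(\mu_1+C\sqrt{\mu_1\log n})$, and this holds with enormous room since $\mu_1=O(d)=o(n)=o(k_2)$. The intermediate window of width $O(\sqrt{\mu_2\log n})$ around $\mu_2$ is the only delicate step: there $\min(b_j,r)\ge\mu_2-C\sqrt{\mu_2\log n}$ gives RHS $\ge k_2\mu_2-O(k_2\sqrt{\mu_2\log n})=(1-o(1))\,k_1\mu_1$ (using $\mu_2=\Omega(\log^4 n)$), while LHS $\le r(\mu_1+C\sqrt{\mu_1\log n})=O(\mu_1\mu_2)=O(d^2)=O(n^{2\epsilon/3})$, which is far below $k_1\mu_1=\Omega(n\log^4 n)$ because $d\le n^{\epsilon/3}$. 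Beyond this intermediate case, the rest of the argument is routine bookkeeping on Chernoff tails.
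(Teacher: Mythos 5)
Your proposal is correct and follows essentially the same route as the paper: Chernoff concentration on coordinates and sums, shaving $t=\sum a_i-\sum b_j$ units off $a$, and a case split on $r$ versus $\mu_2$ to check Gale--Ryser. The only (immaterial) differences are that the paper decrements the current maximum of $a'$ iteratively and re-sorts, whereas you decrement the $t$ smallest entries (slightly cleaner for preserving monotonicity), and the paper collapses your last two $r$-cases into a single case using $\sum_j\min(b_j,r)\ge k_2\ge n$.
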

\begin{proof}
    First, note that since both sequences are drawn from a binomial distribution, by applying a Chernoff bound, with  a probability of at least $1 - 2n^{-5}$, we have $a_i \in (\eta_1 \rho_1 \pm 5\sqrt{\eta_1 \rho_1}\cdot \log n)$ (resp., $b_i \in (\eta_2 \rho_2 \pm 5\sqrt{\eta_2 \rho_2}\cdot \log n)$). Thus, using a union bound, with a high probability this event holds all for $a_i$'s and $b_i$'s.

    Similarly, using the Chernoff bound, we get that with high probability,
    \begin{align*}
    \sum_i^{k_1}a_i \in \left(k_1\eta_1 \rho_1 \pm O(\sqrt{k_1\eta_1 \rho_1}\cdot \log n)\right),\\
    \sum_i^{k_2}b_i \in \left(k_2\eta_2 \rho_2 \pm O(\sqrt{k_2\eta_2 \rho_2}\cdot \log n)\right).
    \end{align*}

    Let $D = \sum_i^{k_1}a_i - \sum_i^{k_2}b_i$. Since $k_1\eta_1 \rho_1 = k_2\eta_2 \rho_2$, by the above bounds, $D \leq O(\sqrt{k_1\eta_1 \rho_1}\cdot \log n)$. We construct a degree sequence $a' = (a'_1, a'_2, \ldots, a'_{k_1})$ in $D$ iterations. Initially, we set $a'_i = a_i$ for all $i$. At each iteration, we choose the maximum $a'_i$ and reduce its value by one. In the end, we sort $a'$ in decreasing order. Note that according to the construction, we have $\sum_i^{k_1}a_i - \sum_i^{k_1}a'_i = D \leq O(\sqrt{k_1\eta_1 \rho_1}\cdot \log n)$. Furthermore, the maximum of $a'$ cannot be less than $\eta_1 \rho_1 - 5\sqrt{\eta_1 \rho_1}\cdot \log n$, as otherwise, $\sum_i^{k_1}a_i$ should be significantly less than $k_1\eta_1 \rho_1 - O(\sqrt{k_1\eta_1 \rho_1}\cdot \log n)$ which is a contradiction. Thus, $0 \leq a_i - a'_i \leq 10\sqrt{\eta_1 \rho_1}\cdot \log n$ for all $1 \leq i \leq k_1$. Therefore, it remains to show that $(a'_1, a'_2, \ldots, a'_{k_1})$ and $(b_1, b_2, \ldots, b_{k_2})$ is a bigraphic pair of sequences.

    For this aim, we use Gale–Ryser theorem in \Cref{pro:gale–ryser}. We need to show that the conditions in this theorem hold for the pair of sequences. Formally, for each $1 \leq r \leq k_1$, we claim that $\sum_i^{r} a'_i \leq \sum_i^{k_2} \min(b_i, r)$. If $r \geq \eta_2\rho_2 + 5\sqrt{\eta_2\rho_2}\cdot \log n$, then
\begin{align*}
    \sum_i^{k_2} \min(b_i, r) = \sum_i^{k_2} b_i \geq \sum_i^{r} a'_i,
\end{align*}
where the first equality follows by the high probability event of having $b_i \leq \eta_2\rho_2 + 5\sqrt{\eta_2\rho_2}\cdot \log n$ for all $1 \leq i \leq k_2$. If $r < \eta_2\rho_2 + 5\sqrt{\eta_2\rho_2}\cdot \log n$, then
\begin{align*}
    \sum_i^{r} a'_i &\leq r \cdot (\eta_1 \rho_1 + 5\sqrt{\eta_1 \rho_1}\cdot \log n)\\
    &\leq O(d^2) \leq n \leq \sum_i^{k_2} \min(b_i, r),
\end{align*}
which completes the proof.
\end{proof}

\begin{corollary}\label{cor:new-degree}
Let $(a_1, a_2, \ldots, a_n)$ be the degree sequence that is produced by the construction. Then, there exists a graph with sequences $(b_1, b_2, \ldots, b_n)$ such that there exists at most $O(\sqrt{n d} \log n)$ broken vertices in the constructed graph.
\end{corollary}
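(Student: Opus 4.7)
The plan is to apply \cref{lem:convert-degree-to-edge} once to each pair of vertex subsets between which the construction prescribes edges, obtaining a nearby bigraphic pair, realize each bipartite block via Gale--Ryser (\cref{pro:gale–ryser}), and take the union of these edge-disjoint blocks to form a single graph. Concretely, I would enumerate the unordered pairs $\{X, Y\}$ of subsets from $\{S^j, A^j_i, B^j_i, D_i\}$ for which the construction puts positive probability on $Y$ being a neighbor type of $X$. There are only $O(1/\epsilon^2)$ such pairs, which is $O(1)$ for the purposes of the corollary.

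For each such pair, the sequences $\{\deg_X^Y(v) : v \in X\}$ and $\{\deg_Y^X(u) : u \in Y\}$ are independent binomials with parameters $(\eta,\rho) = (d', p)$ where $d' \cdot p$ lies in the interval $[\log^4 N, \, d']$; the subset sizes $|X|,|Y|$ are $\Theta(N)=\Theta(n)$; and the probabilities were deliberately calibrated so that the two expected sums agree, i.e.\ $|X|\cdot \E[\deg_X^Y(\cdot)]=|Y|\cdot \E[\deg_Y^X(\cdot)]$. Thus every hypothesis of \cref{lem:convert-degree-to-edge} is met. The lemma produces, with high probability, a modified pair of sequences that is bigraphic and differs from the drawn values in at most $O(\sqrt{nd}\log n)$ coordinates; invoking \cref{pro:gale–ryser} on this modified pair gives a bipartite graph on $(X,Y)$ realizing those degrees. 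Since edge sets across distinct pairs $\{X,Y\}$ are disjoint, their union is a well-defined graph $G$, each broken vertex can be charged to the unique pair in whose modification it appears, and a union bound over the $O(1)$ pairs keeps both the failure probability polynomially small and the total broken count at $O(\sqrt{nd}\log n)$.

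The only genuine wrinkle, and what I would expect to be the main (minor) obstacle, is the $D_i \to D_i$ block, which is an intra-subset edge set rather than a bipartite one, so \cref{lem:convert-degree-to-edge} does not apply verbatim. The clean fix is to partition $D_i$ uniformly at random into two halves $D_i^{(a)},D_i^{(b)}$, balanced up to $O(\sqrt{|D_i|}\log n)$ by Chernoff (\cref{prop:chernoff}), and realize the $D_i$ self-edges as a bipartite graph between these halves after doubling the corresponding success probability. \cref{lem:convert-degree-to-edge} then applies on the halved instance and contributes at most another $O(\sqrt{nd}\log n)$ broken vertices, which is absorbed into the final bound. Everything else is bookkeeping: verifying the matching-sum hypothesis for each pair follows directly from the probability table in \cref{sec:input-distribution}, and the degree of each vertex in the constructed $G$ equals the sum of its drawn degrees to all neighbor types whenever that vertex is unbroken in every block it participates in.
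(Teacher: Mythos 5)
Your approach matches the paper's: apply \cref{lem:convert-degree-to-edge} to each pair of vertex subsets that carry edges, union-bound over the constantly-many pairs, and handle the intra-$D_i$ edges by splitting each $D_i$ into two halves and realizing those edges as a bipartite graph between the halves (exactly the fix the paper records in \cref{rem:bipartite-note}). One small inaccuracy worth noting, though it does not affect the argument: there is no need to ``double the success probability'' for the $D_i$ self-block --- the lemma's $\rho$ parameter governs the binomial from which each vertex's \emph{degree} into $D_i$ is drawn, and that degree is unchanged by the split; only $k_1,k_2$ halve, and since they halve symmetrically the sum condition $k_1\eta_1\rho_1 = k_2\eta_2\rho_2$ is preserved with the original $\rho$.
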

\begin{proof}
    Proof follows by applying \Cref{lem:convert-degree-to-edge} for the degree sequence of induced subgraph for all pairs $(X,Y)$ such that $X, Y \in (S^1 \cup S^2) \cup (\bigcup_{i=1}^{1/\epsilon} \{A^1_i, A^2_i, B^1_i, B^2_i, D_i\})$.
\end{proof}

\begin{remark}\label{rem:bipartite-note}
    Note that there are edges inside $D_i$ for each $i \in [1/\epsilon]$, hence, we cannot use \Cref{lem:convert-degree-to-edge} to put edges in $G[D_i]$ since the graph is not bipartite. However, we can assume that the number of vertices in each $D_i$ is even, and there are two parts in each $D_i$ where vertices of each part are only connected to the other part. With this small modification, we can use \Cref{lem:convert-degree-to-edge} for $G[D_i]$.
\end{remark}

\paragraph{Edges of the graph:} We use \Cref{cor:new-degree} to construct a graph with the given degree sequence that we determined before. By \Cref{cor:new-degree}, there are at most $O(\sqrt{nd}\log n)$ broken vertices. Distribution \yesdist{} (resp., \nodist{}) picks a graph uniformly from the set of all possible graphs that satisfy the modified degree sequence corresponding to \yesdist{} (resp., \nodist{}).

Now we observe some properties of the input distribution that are immediately implied by the construction and important for the proof.

\begin{observation}\label{obs:broken-vertex-upperbound}
    For any graph that is drawn from the input distribution, with high probability, there exists at most $O(\sqrt{nd} \log n)$ broken vertices.
\end{observation}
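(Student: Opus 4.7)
The plan is to derive this observation as an immediate consequence of \cref{cor:new-degree} together with a union bound, since the heavy lifting has already been done in \cref{lem:convert-degree-to-edge}. That lemma shows that for any particular ordered pair of vertex subsets $X, Y$, one can modify the drawn degree sequence on at most $O(\sqrt{|X|\,\eta_1 \rho_1} \cdot \log n)$ vertices so the resulting pair of sequences is bigraphic, and the modification succeeds with failure probability at most inverse polynomial in $n$.

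First, I would enumerate the ordered pairs $(X, Y)$ of vertex subsets drawn from $\{S^1, S^2\} \cup \bigcup_{i=1}^{1/\epsilon} \{A^1_i, A^2_i, B^1_i, B^2_i, D_i\}$ for which the construction specifies cross-edges; there are $O(1/\epsilon^2)$ such pairs. For each pair I would verify that the binomial parameters arising from the probability tables satisfy the hypotheses of \cref{lem:convert-degree-to-edge}: the expected per-vertex contribution $\eta \rho$ sits in the range $[\Omega(\log^4 N), O(d)]$ as required (since every probability listed in the construction is either $\log^4 N / d'$ or at least $\epsilon^4 d / d'$, and no per-vertex expectation exceeds $d$), and the two expected totals $k_1 \eta_1 \rho_1$ and $k_2 \eta_2 \rho_2$ agree by the symmetry built into the transition probabilities.

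Next, for each pair I would invoke \cref{lem:convert-degree-to-edge} to obtain, with probability $1 - O(n^{-5})$, at most $O\!\left(\sqrt{k_1 \eta_1 \rho_1} \cdot \log n\right) = O(\sqrt{nd}\, \log n)$ broken vertices coming from that pair. A union bound over the $O(1/\epsilon^2) = O(1)$ pairs, together with the Chernoff events inside each application of \cref{lem:convert-degree-to-edge}, yields a total of $O(\sqrt{nd}\, \log n)$ broken vertices with high probability. Edges residing inside a single $D_i$ are handled via \cref{rem:bipartite-note}, which artificially bipartitions each $D_i$ into two equal halves so that the same bigraphic machinery applies.

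The step I expect to require the most care --- though it is bookkeeping rather than an actual obstacle --- is confirming that $\deg_X^Y(v)$ really is binomial with the claimed parameters so that \cref{lem:convert-degree-to-edge} is applicable. This follows directly from the construction: each of the $d'$ neighbor-types of $v$ is chosen independently according to the stated probabilities, so $\deg_X^Y(v)$ is a sum of $d'$ independent Bernoullis with success probability $\Pr[Y \mid X]$. Because the definitions of $\yesdist$ and $\nodist$ first draw a degree sequence and then choose a graph uniformly among those realizing the post-correction sequence, the count of broken vertices is determined by the drawing phase alone, and so the high-probability bound transfers verbatim to the random graph output of either distribution.
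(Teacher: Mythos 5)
Your proposal is correct and follows exactly the route the paper takes: the paper's proof of this observation is literally a one-line pointer to \cref{cor:new-degree}, which in turn applies \cref{lem:convert-degree-to-edge} to each pair $(X,Y)$ of vertex subsets and takes a union bound. Your write-up just fills in the bookkeeping the paper elides (enumerating the $O(1/\epsilon^2)$ pairs, checking the binomial-parameter hypotheses, the $D_i$ bipartition via \cref{rem:bipartite-note}, and the remark that brokenness is decided at the degree-drawing stage), all of which is faithful to the intended argument.
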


\begin{proof}
    The proof follows by \Cref{cor:new-degree}.
\end{proof}

\begin{claim}\label{clm:perfect-matching-whp}
    With high probability, all the following hold:
    \begin{enumerate}
        \item There exists a matching of size $(1-\epsilon^3)N/4$ between vertices of $S^j$ and $B^j_1$ for all $j \in \{1,2\}$ for all $j \in \{1,2\}$.
        \item There exists a matching of size $(1-\epsilon^3)N/4$ between vertices of $A^j_i$ and $B^j_{i+1}$ for all $i \in [1/\epsilon - 1]$ and $j \in \{1,2\}$.
        \item If the input graph is drawn from \yesdist{}, then there exists a matching of size $(1-2\epsilon^2)N/4$ between $A^1_{1/\epsilon}$ and $A^2_{1/\epsilon}$.
    \end{enumerate}
\end{claim}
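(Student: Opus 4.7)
The plan is to reduce all three sub-claims to a single underlying statement about matchings in random bipartite graphs with logarithmic average degree, and then apply that statement once per pair. In each case of the claim, the asserted matching lives inside a bipartite subgraph $G[X,Y]$ where $|X|, |Y| = \Theta(N)$ and the edges between them come from the construction of \Cref{sec:input-distribution}: every vertex on either side draws a binomial number of neighbors in the opposite class with mean $\log^4 N$, and (by \Cref{lem:convert-degree-to-edge} and \Cref{cor:new-degree}) conditioned on the final degree sequence the graph is drawn uniformly among all realizations, modulo at most $O(\sqrt{nd}\log n)$ broken vertices.

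First I would prove the following self-contained lemma: if $H$ is a bipartite graph on classes $L, R$ with $|L| = |R| = k = \Theta(N)$ constructed as above, then with probability $1 - N^{-\omega(1)}$ it admits a matching of size at least $(1-\epsilon^3) k$. Since $d \leq n^{\epsilon/3}$, the count $O(\sqrt{nd}\log n)$ of broken vertices guaranteed by \Cref{obs:broken-vertex-upperbound} is $N^{1 - \Omega(1)} = o(\epsilon^3 k)$; discard them from both sides, so that every remaining vertex has exactly its binomially drawn degree. A Chernoff bound (\Cref{prop:chernoff}) plus a union bound shows that every such vertex has degree in $(1 \pm 1/\log N)\log^4 N$ with high probability. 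By \Cref{prop:konig} it suffices to rule out a vertex cover of size less than $(1-\epsilon^3)k$, equivalently any set $T \subseteq L$ of size $t \leq (1-\epsilon^3)k$ whose neighborhood $N(T)$ has size at most $t - 1$. For fixed $T$ and fixed $U \subseteq R$ with $|U| = t-1$, a standard switching or configuration-model estimate yields $\Pr[N(T) \subseteq U] \leq (|U|/k)^{\Omega(t \log^4 N)}$; summing against $\binom{k}{t}\binom{k}{t-1}$ and using that $|U|/k \leq 1 - \epsilon^3/2$ on the relevant range gives a total bound of $\exp(-\Omega(\log^4 N))$, which easily absorbs the broken-vertex and concentration failure events.

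Given the lemma, each of the three cases follows by reading off the transition probabilities in \Cref{sec:input-distribution}. For (1), $|S^j| = |B^j_1| = N/4$; each $S^j$ vertex has $\log^4 N$ edges all into $B^j_1$, and each $B^j_1$ vertex routes each edge into $S^j$ with probability $\log^4 N / d'$, so both sides have mean degree $\log^4 N$ into the other. For (2), $|A^j_i| = |B^j_{i+1}| = N/4$ and both $\Pr[Y = B^j_{i+1}]$ from $A^j_i$ and $\Pr[Y = A^j_i]$ from $B^j_{i+1}$ equal $\log^4 N / d'$, giving mean degree $\log^4 N$ again. For (3) under \yesdist{}, $|A^1_{1/\epsilon}| = |A^2_{1/\epsilon}| = (1-\epsilon^2) N/4$ and $\Pr[Y = A^{3-j}_{1/\epsilon}]$ equals $\log^4 N / d'$, so the lemma applied with $k = (1-\epsilon^2)N/4$ yields a matching of size at least $(1-\epsilon^3)(1-\epsilon^2)N/4 \geq (1-2\epsilon^2)N/4$ (the last inequality holds for $\epsilon \leq 0.01$). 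A final union bound over the $O(1/\epsilon)$ pairs finishes the proof.

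The main obstacle will be proving the switching estimate $\Pr[N(T) \subseteq U] \leq (|U|/k)^{\Omega(t \log^4 N)}$ under the uniform distribution on bipartite graphs with a fixed degree sequence, where edges are not independent. The cleanest route is to expose the endpoints of the edges incident to $T$ one by one: conditioned on any history, the next endpoint lies in a fixed $U \subseteq R$ with probability at most $|U|\cdot \Delta_R / \mathrm{vol}(R \setminus U) \leq 2|U|/k$, since $\Delta_R = O(\log^4 N)$ and $\mathrm{vol}(R) = \Theta(k \log^4 N)$. The resulting negative association lets the bound multiply as in the independent-edge model, losing only constant factors in the exponent, and the rest of the argument is a routine Hall-type union bound.
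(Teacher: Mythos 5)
Your approach is genuinely different from the paper's and worth comparing. You prove the claim by K\"onig/Hall: rule out small vertex covers via a union bound over all pairs $(T, U)$ with $T \subseteq L$, $U \subseteq R$ and $N(T) \subseteq U$, bounding $\Pr[N(T)\subseteq U]$ by an edge-exposure argument on the uniform bipartite graph with the given degree sequence. The paper instead exploits the near-regularity directly: after invoking the Chernoff concentration and \Cref{lem:convert-degree-to-edge}, every vertex degree lands in $\log^4 N \pm O(\log^3 N)$, so assigning the uniform fractional value $f_e = 1/(\log^4 N + 10\log^3 N)$ to every edge is immediately feasible, has value $\frac{N}{4}\bigl(1 - O(1/\log N)\bigr)$, and the integrality of the bipartite matching polytope (equivalently \Cref{prop:konig}) converts this to an integral matching. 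This is substantially shorter, needs no union bound over subsets, and sidesteps entirely the dependence structure of the uniform-degree-sequence model that your approach must wrestle with; your route is more general (it would apply even if degrees were not tightly concentrated around a single value) but pays for that generality in overhead.

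On correctness of your route, two points deserve care. First, the reduction from ``max matching $< (1-\epsilon^3)k$'' to a Hall violator is slightly off: via K\"onig the witnessing pair satisfies $|N(T)| \le |T| - \epsilon^3 k$, not $|N(T)| \le |T| - 1$, and correspondingly $|T| > \epsilon^3 k$; this deficiency is what makes the union bound go through, so it should be stated. Second, the intermediate bound ``$\Pr[\text{next endpoint} \in U \mid \text{history}] \le |U|\Delta_R / \mathrm{vol}(R\setminus U) \le 2|U|/k$'' fails when $|U|$ is close to $k$: $\mathrm{vol}(R\setminus U) \approx (k-|U|)\log^4 N$, so the middle expression is $\approx |U|/(k-|U|)$, which exceeds $1$ once $|U| > k/2$, and $|U|$ can be as large as $(1-2\epsilon^3)k$ in the relevant range. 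The conclusion is salvageable --- one can instead bound the conditional probability by $(1-\delta)(1+o(1))$ where $k-|U| = \delta k$ with $\delta \ge \epsilon^3$, and then $(1-\delta)^{\Omega(t\log^4 N)}$ with $t \ge \epsilon^3 k$ still crushes the $\binom{k}{t}\binom{k}{|U|} \le 4^k$ count --- but as written the chain of inequalities does not hold. You would also need to actually justify the negative-association / sequential-exposure step for the uniform distribution over bipartite graphs with a fixed (non-regular) degree sequence, which is precisely the kind of technicality the paper's fractional-matching argument is designed to avoid.
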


\begin{proof}
    Let $v \in A_i^j \cup B_{i+1}^j$. Degree of vertex $v$ in $G[A_i^j, B_{i+1}^j]$ is concentrated around $\log^4 N$ with $10\log^3 N$ error since the expected degree is $\log^4 N$, we can show that using a standard Chernoff bound, the error is at most $10\log^3 N$ with high probability. Furthermore, by \Cref{lem:convert-degree-to-edge}, the degree of a vertex can decrease by $10\log^3 N$ additive value when we put edges in the graph using \Cref{lem:convert-degree-to-edge}. Thus, the degree cannot be smaller than $\log^4 N - 20\log^3 N$ and larger than $\log^4 N + 10 \log^3 N$. We construct a fractional matching such that for each edge $e$ in $G[A_i^j, B_{i+1}^j]$, we set $f_e = 1/(\log^4 N + 10\log^3 N)$. Since the degree is at most $\log^4 N + 10 \log^3 N$, this fractional matching is feasible. Let $E(v)$ be the set of edges incident to $v$ in $G[A_i^j, B_{i+1}^j]$, then due to the integrality gap of the fractional matching polytope in bipartite graphs, we have 
    \begin{align*}
    \mu(G[A_i^j, B_{i+1}^j]) \geq \sum_{v \in A_i^j} \sum_{e \in E(v)} f_e &\geq \sum_{v \in A_i^j} \frac{\log^4 N - 20\log^3 N}{\log^4 N + 10 \log^3 N}\\
    &\geq \frac{N}{4} \cdot \left(1 - \frac{40}{\log N}\right)\\
    &\geq (1-\epsilon^3)\frac{N}{4},
    \end{align*}
    concluding the proof for statement (2).\footnote{In the proof of this lemma, we need $\epsilon$ to be constant. However, we might use a slightly modified version of the result by \cite{friezepittel} to show that there exists a perfect matching in $G[A_i^j, B^j_{i+1}]$ and $G[A_{1/\epsilon}^1, A_{1/\epsilon}^2]$. With this change, we do not need the assumption for $\epsilon$ to be constant.} A similar argument also works for statement (1) since the degrees and sizes of subgraphs are the same.

    Proof of the third statement is similar to the second statement since the degree of vertices in $G[A^1_{1/\epsilon}, A^2_{1/\epsilon}]$ is concentrated around $\log^4 N$ with $10 \log^3 N$ error. Let $E(v)$ be the set of edges incident to $v$ in $G[A^1_{1/\epsilon}, A^2_{1/\epsilon}]$. If we construct the same fractional matching, then we get
    \begin{align*}
    \mu(G[A^1_{1/\epsilon}, A^2_{1/\epsilon}]) &\geq \sum_{v \in A^1_{1/\epsilon}} \sum_{e \in E(v)} f_e \\
    &\geq \sum_{v \in A^1_{1/\epsilon}} \frac{\log^4 N - 20\log^3 N}{\log^4 N + 10 \log^3 N} \\
    &\geq \frac{(1-\epsilon^2)N}{4} \cdot \left(1 - \frac{40}{\log N}\right)\\
    &\geq (1-2\epsilon^2)\frac{N}{4},
    \end{align*}
    concluding the proof for statement (3).
\end{proof}

\begin{lemma}\label{lem:matching-size}
Let $G_\yes \sim \yesdist$ and $G_\no \sim \nodist$. Then, with high probability,
\begin{itemize}
    \item $\mu(G_\yes) \geq \left(2/\epsilon + 1 - 4\epsilon^2\right)\frac{N}{4}$,
    \item $\mu(G_\no) \leq  \left(2/\epsilon + 4\epsilon\right) \frac{N}{4}$.
\end{itemize}
\end{lemma}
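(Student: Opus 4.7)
The two bounds are independent; both reduce to elementary counting once the probabilistic scaffolding of the earlier lemmas is in place.

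For the lower bound on $\mu(G_\yes)$, the plan is to assemble a single large matching out of the three families of sub-matchings guaranteed by \cref{clm:perfect-matching-whp}. A union bound shows that the three high-probability events of that claim hold simultaneously; condition on this. Then define $M_\yes$ to be the disjoint union of (i) a $(1-\epsilon^3)N/4$-matching between $S^j$ and $B_1^j$ for each $j \in \{1,2\}$, (ii) a $(1-\epsilon^3)N/4$-matching between $A_i^j$ and $B_{i+1}^j$ for each $j \in \{1,2\}$ and each $i \in [1/\epsilon - 1]$, and (iii) a $(1-2\epsilon^2)N/4$-matching between $A_{1/\epsilon}^1$ and $A_{1/\epsilon}^2$. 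Each of the vertex subsets $S^j, A_i^j, B_i^j$ is touched by at most one of the constituent matchings (they live on disjoint ``layers''), so $M_\yes$ is a legitimate matching in $G_\yes$. Summing the sizes gives
\[
|M_\yes| \;\geq\; 2(1-\epsilon^3)\tfrac{N}{4} \;+\; 2\bigl(\tfrac{1}{\epsilon}-1\bigr)(1-\epsilon^3)\tfrac{N}{4} \;+\; (1-2\epsilon^2)\tfrac{N}{4} \;=\; \left(\tfrac{2}{\epsilon} + 1 - 4\epsilon^2\right)\tfrac{N}{4}.
\]

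For the upper bound on $\mu(G_\no)$, the plan is to exhibit an explicit vertex cover and use the trivial inequality $\mu(G) \leq \nu(G)$, which holds in every graph (so bipartiteness is not even required). Take
\[
C \;=\; \bigcup_{j \in \{1,2\},\, i \in [1/\epsilon]} B_i^j \;\;\cup\;\; \bigcup_{i \in [1/\epsilon]} D_i,
\]
whose size is $2\cdot(1/\epsilon)\cdot(N/4) + (1/\epsilon)\cdot \epsilon^2 N = (2/\epsilon + 4\epsilon)\cdot N/4$. I would then verify that $C$ covers every edge of $G_\no$ by walking through the edge types in the support of $\nodist{}$: any edge incident to some $D_i$ is covered by $D_i \in C$; each remaining edge type --- namely $S^j - B_1^j$, $A_{i-1}^j - B_i^j$, $A_i^j - B_i^j$ (for $i < 1/\epsilon$), $A_{1/\epsilon}^j - B_{1/\epsilon}^j$, and $B_{1/\epsilon}^1 - B_{1/\epsilon}^2$ --- has at least one endpoint in some $B_i^j \in C$. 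The decisive fact being exploited is that $\nodist{}$ produces \emph{no} edges between $A_{1/\epsilon}^1$ and $A_{1/\epsilon}^2$ (this is precisely where $\nodist{}$ and $\yesdist{}$ differ), so every non-delusive edge really does have a $B$-endpoint. Hence $\mu(G_\no) \leq \nu(G_\no) \leq |C| = (2/\epsilon + 4\epsilon)N/4$.

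The main obstacle is really just careful bookkeeping: enumerating all edge types supported by $\nodist{}$ and confirming that the proposed cover hits each one. No new probabilistic argument is needed beyond the initial conditioning on the high-probability events of \cref{clm:perfect-matching-whp} and on the realizability of the degree sequence from \cref{obs:broken-vertex-upperbound}.
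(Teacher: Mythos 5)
Your proof is correct and follows essentially the same route as the paper: the lower bound on $\mu(G_\yes)$ is obtained by gluing together the three families of matchings from \cref{clm:perfect-matching-whp}, and the upper bound on $\mu(G_\no)$ by exhibiting a vertex cover. The only (minor, but clean) difference is in the second part: the paper first applies K\"onig's theorem to the bipartite subgraph $G_\no[V \setminus \bigcup_i D_i]$ with the cover $\bigcup_{i,j} B^j_i$, and then separately adds $\epsilon N$ for the matching edges touching the delusive sets; you instead put all of $\bigcup_i D_i$ directly into a single cover $C$ and invoke only the trivial direction $\mu \leq \nu$, correctly observing that bipartiteness and K\"onig are not needed for the upper bound. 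Your arithmetic for the \yes{} case also lands exactly on the stated $(2/\epsilon+1-4\epsilon^2)N/4$, whereas the paper's intermediate expression is slightly looser; both suffice.
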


\begin{proof}
Consider the graph $G_\yes$. For each $i\in [1/\epsilon - 1]$ and $j \in \{1,2\}$, by \Cref{clm:perfect-matching-whp}, we have a matching between $A^j_i$ and $B^j_{i+1}$ that matches $(1-\epsilon^3)N/4$ vertices of each part. Also, for each $j \in \{1,2\}$, we have a matching between $S^j$ and $B^j_1$  that matches $(1-\epsilon^3)N/4$ vertices of each part. Moreover, there exists a matching between $A^j_{1/\epsilon}$ and $A^{3-j}_{1/\epsilon}$ that matches $(1-2\epsilon^2)N$ vertices of each part. Since the vertex sets are disjoint, by taking the edges of all these matchings, we have
\begin{align*}
    \mu(G_\yes) &\geq 2 \left(\frac{1}{\epsilon} - 1\right)(1-\epsilon^3)\frac{N}{4} + \frac{N}{2} + (1-2\epsilon^2)\frac{N}{4}\\
    &\geq \left(\frac{2}{\epsilon} + 1 - 3\epsilon^2\right)\frac{N}{4}.
\end{align*}
Now consider $G_\no$. First, we show that $\mu(G_\no[V \setminus \bigcup_{i=1}^{1/\epsilon} D_i]) \leq N/(2\epsilon)$. To see this, note that $G_\no[V \setminus \bigcup_{i=1}^{1/\epsilon} D_i]$ is a bipartite graph which implies that the size of the vertex cover of this graph is equal to the size of the maximum matching by K\"{o}nig’s Theorem (\Cref{prop:konig}). Since there is no edge in the induced graph $G[\bigcup_{i=1, j \in \{1,2\}}^{1/\epsilon} A^j_i \cup \{S^1, S^2\}]$, we take $\bigcup_{i=1, j \in \{1,2\}}^{i \leq 1/\epsilon} B^j_i$ as the vertex cover of this graph. Furthermore, since $|\bigcup_{i=1}^{1/\epsilon} D_i| = \epsilon N$, the number of maximum matching edges that have at least one endpoint in $|\bigcup_{i=1}^{1/\epsilon} D_i|$ is at most $\epsilon N$. Thus, we have 
\begin{align*}
    \mu(G_\no) &\leq \mu\left(G_\no\left[V \setminus \bigcup_{i=1}^{1/\epsilon} D_i\right]\right) + \epsilon N\\ 
    &\leq \frac{N}{2\epsilon} + \epsilon N \\
    &= \left(\frac{2}{\epsilon} + 4\epsilon\right) \frac{N}{4}. \qquad \qedhere
\end{align*}
\end{proof}

\begin{corollary}\label{cor:matching-size-query}
Let $\epsilon < 0.07$. Any algorithm that estimates the size of maximum matching of a graph $G$ that is drawn from input distribution within a factor of $(1, \epsilon n / 7)$ with probability at least 0.51, must be able to distinguish whether $G$ belongs to \yesdist{} or \nodist{}.
\end{corollary}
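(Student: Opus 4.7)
The plan is to combine the matching-size bounds of \cref{lem:matching-size} with the identity $n = (1/2 + 1/\epsilon + \epsilon - \epsilon^2/2)N$ to verify that the gap $\mu(G_\yes) - \mu(G_\no)$ strictly exceeds the additive tolerance $\epsilon n / 7$. Once such a gap is established, any $(1, \epsilon n/7)$-estimator of matching size is forced to output numerically distinguishable values on \yesdist{} and \nodist{} instances, which immediately yields a distinguisher between the two distributions with essentially the same success probability.

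First, I would subtract the two bounds of \cref{lem:matching-size}: with high probability,
\[
\mu(G_\yes) - \mu(G_\no) \;\geq\; (1 - 4\epsilon - 4\epsilon^2)\,\frac{N}{4}.
\]
Simultaneously, substituting the formula for $n$ gives $\epsilon n / 7 = (1 + \epsilon/2 + \epsilon^2 - \epsilon^3/2)\,N/7$. The essential inequality to verify is therefore
\[
\frac{1 - 4\epsilon - 4\epsilon^2}{4} \;>\; \frac{1 + \epsilon/2 + \epsilon^2 - \epsilon^3/2}{7},
\]
which after clearing denominators simplifies to $3 > 30\epsilon + 32\epsilon^2 - 2\epsilon^3$. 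For $\epsilon \leq 0.07$ this holds with a comfortable margin (the right-hand side is at most $\approx 2.26$), so $\mu(G_\yes) > \mu(G_\no) + \epsilon n / 7$ with high probability over the draw of the graph.

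Second, given any algorithm $\mathcal{A}$ that estimates $\mu(G)$ within an additive $(1, \epsilon n / 7)$ factor with probability at least $0.51$, I would build a distinguisher as follows. Using the deterministic bound on $\mu(G_\no)$ from \cref{lem:matching-size}, fix any threshold $\tau$ in the (nonempty) open interval $\bigl(\mu(G_\no),\, \mu(G_\yes) - \epsilon n/7\bigr)$. If $G \sim \yesdist$, then $\hat\mu \geq \mu(G) - \epsilon n/7 \geq \mu(G_\yes) - \epsilon n/7 > \tau$, so the distinguisher outputs \yes; if $G \sim \nodist$, then $\hat\mu \leq \mu(G) \leq \mu(G_\no) < \tau$, so it outputs \no. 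Taking a union bound over the high-probability events of \cref{lem:matching-size} and the $0.51$ success event of $\mathcal{A}$ preserves the claimed success probability.

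I do not anticipate a real obstacle beyond the bookkeeping on the constant $1/7$ and the range $\epsilon < 0.07$; these are calibrated precisely so the algebraic inequality $3 > 30\epsilon + 32\epsilon^2 - 2\epsilon^3$ holds with slack. The only subtle point is the convention for ``$(1,\epsilon n/7)$-approximation'' of matching \emph{size}: under the one-sided reading $\hat\mu \in [\mu(G) - \epsilon n/7,\, \mu(G)]$ that matches the paper's definition of $(\alpha,\epsilon n)$-approximation for matchings, the constants work out exactly, and I would adopt this convention.
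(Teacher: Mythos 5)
Your proposal is correct and follows the paper's proof essentially verbatim: subtract the two bounds of \cref{lem:matching-size}, substitute $n = (1/2 + 1/\epsilon + \epsilon - \epsilon^2/2)N$, and check that for $\epsilon < 0.07$ the gap exceeds $\epsilon n/7$ (the paper routes the algebra through the intermediate bound $(1-4\epsilon-4\epsilon^2)N/4 > N/6 > \epsilon n/7$, but that is cosmetically equivalent to your direct clearing of denominators). The threshold-based distinguisher you spell out is the step the paper leaves implicit, and your remark that the one-sided reading of $(1,\epsilon n)$-estimation is what makes the constant $1/7$ work is a correct observation about the intended convention.
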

\begin{proof}
    Note that we have
    \begin{align*}
        \mu(G_\yes) - \mu(G_\no) \geq \left(1 - 4\epsilon - 4\epsilon^2 \right)\frac{N}{4}.
    \end{align*}
    Moreover, since $\epsilon < 0.07$,
    \begin{align*}
        \left(1 - 4\epsilon - 4\epsilon^2 \right)\frac{N}{4} > \frac{N}{6}.
    \end{align*}
    Combining with the fact that $N > 6/7 \cdot (\epsilon n)$, we obtain the claimed bound.
\end{proof}

\section{A Reduction to a Label Guessing Game on Trees}\label{sec:no-cycle}

In this section, we prove that any algorithm that makes $o(d^{1/\epsilon})$ queries, cannot discover any cycle and only sees a rooted forest with high probability. This effectively reduces the problem to the label guessing game on trees that we outlined in \cref{sec:techniques-label-guessing}. The following lemma formalizes the main result of this section.

\begin{lemma}\label{lem:rooted-forest}
    Let $\mathcal{A}$ be any algorithm that makes at most $o(d^{1/\epsilon})$ queries. Let $F_0$ be the empty graph before the algorithm makes any queries, and for $t > 0$, let $F_t$ be the subgraph that $\mathcal{A}$ discovers after $t$ queries. The following property holds throughout the execution of $\mathcal{A}$ with probability $1 - o(1)$: Suppose that the $t$-th query is made to the adjacency list of vertex $u$ and edge $(u, v)$ is returned. Then, vertex $v$ is a singleton vertex in $F_{t-1}$.
\end{lemma}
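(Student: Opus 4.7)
The plan is a union bound over queries: I will show that at each step $t$, conditioned on the revealed subgraph $F_{t-1}$, the probability that the newly returned edge $(u,v)$ has $v$ incident to some edge of $F_{t-1}$ is at most $O(t/n)$; summing over all $T = o(d^{1/\epsilon})$ queries gives total failure probability $O(T^2/n) = o(1)$, using $T \le d^{1/\epsilon} \le n^{1/3}$ since $d \le n^{\epsilon/3}$.

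By \cref{cor:new-degree}, once the (broken-vertex-adjusted) degree sequence is fixed, $G$ is uniformly distributed over the bipartite graphs realizing it, with each adjacency list in uniformly random order. Under the principle of deferred decisions, when $\mathcal{A}$ queries a position of $u$'s adjacency list that has not been queried before, the returned vertex $v$ is a uniformly random unrevealed neighbor of $u$ in the conditional distribution given $F_{t-1}$. By a switching/configuration-model style argument, the probability that $v$ lies in a fixed vertex set $R$ is at most $O\bigl(\sum_{w \in R} d^{\mathrm{res}}_w / \sum_w d^{\mathrm{res}}_w\bigr)$, where $d^{\mathrm{res}}_w$ denotes the residual unrevealed degree of $w$. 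Plugging in $R = \{w : \deg_{F_{t-1}}(w) \ge 1\}$ together with $|R| \le 2(t-1)$, the uniform bound $d^{\mathrm{res}}_w \le d' = O(d)$, and $\sum_w d^{\mathrm{res}}_w = \Theta(nd)$ (valid since only $t = o(nd)$ edges have been revealed so far), yields the $O(td/(nd)) = O(t/n)$ bound.

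The main technical obstacle I anticipate is justifying the configuration-model-style bound in the uniform-simple-graph model (our distribution produces simple graphs, not pairings). I expect a standard switching argument to suffice: given a realization where $v \in R$, reroute the offending edge to touch a fresh singleton to produce many realizations where $v$ is unrevealed, and bound the preimage-to-image ratio via the residual-degree bounds. A minor side issue is the presence of broken vertices (\cref{obs:broken-vertex-upperbound}); since there are only $O(\sqrt{nd}\log n)$ of them out of $n$ vertices, the probability that $\mathcal{A}$ encounters one in $T$ steps is $o(1)$ by the same style of calculation and can be absorbed into the final union bound. A second, purely bookkeeping, issue is that $u$ itself may already lie in $V(F_{t-1})$, so we should exclude the edge slots used to produce $F_{t-1}$ from the residual count; this only tightens the denominator by a $1-o(1)$ factor and does not affect the asymptotic bound.
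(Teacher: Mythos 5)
Your proposal is correct and follows essentially the same approach as the paper: the paper factors the argument through \cref{lem:edge-probability-upper-bound} (a switching bound showing $\Pr[(u,w)\in E]=O(d/n)$ for any undiscovered pair, including a separate degree-$\log^4 N$ case for $S$-vertices) and then wraps an inductive invariant (no edge between non-singletons) plus a random-permutation bound around it, whereas you collapse this into a single per-step conditional bound $\Pr[v\in R\mid F_{t-1}]=O(t/n)$ — but the core ingredients (a two-edge switching argument in the uniform-degree-sequence model, residual-degree accounting, a union bound over $o(d^{1/\epsilon})$ queries using $d\le n^{\epsilon/3}$, and a separate handling of broken vertices) are identical.
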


\begin{remark}
\Cref{lem:rooted-forest} implies that the discovered forest can be thought of as a rooted forest. In other words, if edge $(u, v)$ is discovered by the algorithm at step $t$ and $v$ is the singleton vertex, then $v$ is the leaf of $F_t$.
\end{remark}

The main technical part to prove \Cref{lem:rooted-forest} is to show that at any time during the execution of the algorithm, for any pair of vertices $(u, v)$ that $\mathcal{A}$ has not discovered an edge yet, the probability of having an edge $(u, v)$ is at most $O(d/n)$. To see this, note that if $u$ and $v$ belong to two blocks in the construction that there is no edge between them, then the probability of having an edge between them is zero. Now if they belong to two blocks that we put edges between them, then since we put almost regular graphs with a degree of at most $O(d)$ between any two blocks, the probability of having that edge is $O(d/n)$. This is not a formal argument and in order to formalize this intuition, we use a coupling argument.

\begin{lemma}\label{lem:edge-probability-upper-bound}
    Let $(u, v)$ be a pair of vertices that the algorithm has not discovered an edge between them. Then, the probability of having the edge $(u, v)$ in $G$ is $O(d/n)$.
\end{lemma}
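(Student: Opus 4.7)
The plan is to bound $\Pr[(u,v)\in G \mid \mathcal{V}]$ by a switching argument, where $\mathcal{V}$ denotes the algorithm's transcript so far. Let $X\ni u$ and $Y\ni v$ be the blocks containing $u$ and $v$. If the transition probabilities in \Cref{sec:input-distribution} forbid any $X$--$Y$ edges, the probability is zero. Otherwise, conditioned on the Gale--Ryser adjusted degree sequences from \Cref{lem:convert-degree-to-edge}, the $X$--$Y$ subgraph is uniform over bipartite graphs with these degrees, and every block-to-block degree is $O(d)$ with high probability by Chernoff concentration on the binomial draws together with the $O(\sqrt{d}\log n)$ slack allowed by \Cref{lem:convert-degree-to-edge}.

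Let $N_1$ and $N_2$ denote the number of such graphs that are consistent with $\mathcal{V}$ and that respectively contain and do not contain the edge $(u,v)$. I would bound $N_1/N_2$ by constructing a many-to-few map from containing graphs to non-containing graphs. Given $G_1$ with $(u,v)\in G_1$, a \emph{switch} picks an edge $(u',v')\in G_1$ with $u'\in X\setminus\{u\}$ and $v'\in Y\setminus\{v\}$, subject to $(u,v'),(u',v)\notin G_1$ and subject to neither $u'$ nor $v'$ appearing in any adjacency-list entry revealed by $\mathcal{V}$. Deleting $(u,v)$ and $(u',v')$ while inserting $(u,v')$ and $(u',v)$ produces $G_2$ with an identical degree sequence, still consistent with $\mathcal{V}$, and lacking $(u,v)$.

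The two counts yield the bound. Every $G_1$ admits $\Omega(nd)$ valid switches: the bipartite subgraph between $X\setminus\{u\}$ and $Y\setminus\{v\}$ inside $G_1$ contains $\Theta(nd)$ edges (since $|X|,|Y|=\Theta(n)$ up to a constant in $\epsilon$ and block-to-block degrees are $\Theta(d)$), of which at most $\deg_{G_1}(u)\cdot\deg_{G_1}(v)=O(d^2)$ are excluded by the non-adjacency requirements at $u$ and $v$, and at most $o(d^{1/\epsilon})$ touch queried vertices; since $d\leq n^{\epsilon/3}$ both excluded sets are negligible compared to $nd$. Conversely, any fixed $G_2$ is reached by at most $\deg_{G_2}(u)\cdot\deg_{G_2}(v)=O(d^2)$ pairs $(G_1,(u',v'))$, because inverting the switch forces $v'\in N_{G_2}(u)\cap Y$ and $u'\in N_{G_2}(v)\cap X$. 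Summing over $G_1$ and $G_2$ gives $N_1\cdot \Omega(nd)\leq N_2\cdot O(d^2)$, so $\Pr[(u,v)\in G\mid \mathcal{V}]=N_1/(N_1+N_2)=O(d/n)$.

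The main obstacle is the bookkeeping around the conditioning on $\mathcal{V}$: the switch must preserve every adjacency-list entry already revealed, which is why we exclude switch partners incident to queried vertices. A minor subtlety is that intra-block edges within $D_i$ are not literally bipartite, but the splitting described in \Cref{rem:bipartite-note} makes the same switching argument apply verbatim. Once the bound $\Pr[(u,v)\in G\mid \mathcal{V}]=O(d/n)$ is in hand, \Cref{lem:rooted-forest} follows by a union bound over the $o(d^{1/\epsilon})$ queries against the $o(d^{1/\epsilon})$ previously seen vertices.
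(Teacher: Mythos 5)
Your proposal uses the same core technique as the paper: a $2$-switch (forbidden-subgraph swap) coupling between graphs containing $(u,v)$ and graphs not containing it, with the ratio bounded by counting valid switch partners forward and backward. The forward/backward counts you give ($\Omega(n d)$ versus $O(d^2)$) match the paper's $\Theta(n d_2)$ versus $\Theta(d_1 d_2)$, and your restriction that the switch partner avoid all vertices seen in the transcript is the right way to preserve consistency with $\mathcal{V}$, matching the paper's requirement that $(x,y)$ be undiscovered.

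Two small differences. First, the paper treats the case where one endpoint lies in $S^1\cup S^2$ separately, using a $1$-switch that moves $(u,v)$ to $(w,v)$ for some low-degree $w\in S^1$; you instead apply the $2$-switch uniformly in both cases, which is cleaner and sidesteps some of the bookkeeping in the paper's $S$-case. Second, your claim that ``block-to-block degrees are $\Theta(d)$'' is not literally true for every admissible pair $(X,Y)$ — several of the cross-block degrees (e.g.\ $B^j_i$ to $A^j_{i-1}$, or any block to $S^j$) are $\Theta(\log^4 N)$, not $\Theta(d)$. The argument survives because what matters is the block-to-block degree $d_{XY}$ itself: the forward count is $\Theta(n\,d_{XY})$, the backward count is $O(d_{XY}^2)$, and since $d_{XY}=O(d)$ the ratio is $O(d_{XY}/n)=O(d/n)$ in every case. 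The paper spells this out explicitly by distinguishing $d_1,d_2 = \Theta(d)$ from $d_1,d_2 = \Theta(\log^4 n)$; you should do the same rather than asserting $\Theta(d)$ across the board. With that correction the proposal matches the paper's proof in substance.
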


Before proving \Cref{lem:edge-probability-upper-bound}, first we show how we can complete the proof of \Cref{lem:rooted-forest} using \Cref{lem:edge-probability-upper-bound}.

\begin{proof}[Proof of \Cref{lem:rooted-forest}]
    The proof consists of two parts. First, we show that during the execution of the algorithm at any time $t$, if $u$ and $v$ are two non-singleton vertices, then there is no edge between $u$ and $v$. We use induction on $t$ to prove this claim. For $t = 0$ this claim clearly holds. At time $t > 0$, suppose that $\mathcal{A}$ finds an edge $(u, v)$ such that $v$ is a singleton in $F_{t-1}$ (similarly, $u$ can be a singleton vertex). Now we need to show that $v$ does not have any edge to non-singleton vertices in $F_{t-1}$ except $v$. Note that the probability of having an edge between $v$ and any of non-singleton vertices in $F_{t-1}$ is $O(d/n)$. Since $\mathcal{A}$ make at most $o(d^{1/\epsilon})$ queries, there are at most $o(d^{1/\epsilon})$ non-singleton vertices in $F_{t-1}$. Thus, by union bound, the probability of having an edge between $v$ and non-singleton vertices of $F_{t-1}$ is $o(d^{1/\epsilon}) \cdot O(d/n) = o(d^{1/\epsilon + 1}/n)$. Moreover, the induction has $o(d^{1/\epsilon})$ steps since the algorithm makes at most $o(d^{1/\epsilon})$ queries. Therefore, the probability of failure over all steps is at most $o(d^{1/\epsilon}) \cdot o(d^{1/\epsilon + 1}/n) = o(1)$ because $d = n^{\epsilon/3}$.

    Second, we show that if we query the adjacency list of a singleton vertex $u$ and the algorithm discovers edge $(u,v)$, then $v$ is also a singleton vertex. Fix a singleton vertex $u$. By \cref{lem:edge-probability-upper-bound}, the probability of having an edge between $u$ and each of the non-singleton vertices in the forest is $O(d/n)$. Hence, the expected number of edges between $u$ and non-singleton vertices is at most $o(d^{1/\epsilon + 1}/n)$ since there are at most $o(d^{1/\epsilon})$ non-singleton vertices. Furthermore, $u$ has $\Omega(d)$ neighbors according to the construction and the adjacency list of $u$ is randomly permuted which implies that the probability of the first neighbor in the adjacency list to be non-singleton is $o(d^{1/\epsilon}/n)$. Since the algorithm makes at most $o(d^{1/\epsilon})$ queries, the probability of seeing an edge between a singleton vertex and non-singleton vertex when the algorithm queries the singleton vertex's adjacency list is at most $o(d^{1/\epsilon}) \cdot  o(d^{1/\epsilon}/n) = o(1)$ by union bound, which completes the proof.
\end{proof}

\subsection{Proof of~\cref{lem:edge-probability-upper-bound}}

Suppose that $u \in X$ and $v \in Y$, where $X$ and $Y$ show the subset in the construction that $u$ and $v$ belong to. If there is no edge in the construction between two subsets $X$ and $Y$, then the probability of having edge $(u,v)$ is zero. Now we consider two possible scenarios for the types $X$ and $Y$: 1) one of $X$ or $Y$ is of type $S^1$ or $S^2$, 2) none of $X$ or $Y$ is of type $S^1$ or $S^2$.

In the first case, without loss of generality assume that $X \in S^1$ and $Y \in B^1_1$. Let $v \in B^1_1$. According to the binomial distribution of neighbors of $v$, the expected number of $S^1$ neighbors of $v$ is $\log^4 N$. Thus, using the Chernoff bound, the total number of edges between $B^1_1$ and $S^1_1$ is not larger than $\frac{N}{3}\log^4 N$ with high probability, which implies that there are at least $\frac{N}{6}\log^4 N$ vertices of $S^1_1$ that have a degree equal to zero. Now let $\mathcal{G}$ be the set of all graphs in the input distribution that have edge $(u,v)$, and $\hat{\mathcal{G}}$ be the set of all graphs in the input distribution that does not have edge $(u,v)$. For a graph in $\mathcal{G}$, we can remove the edge $(u,v)$ and add edge $(w,v)$ for a vertex $w \in S^1_1$ that has degree zero. Since there exists $O(n \log^4 n)$ such $w$, we can couple the initial graph to $\Omega(n\log^4 n)$ graphs in $\hat{\mathcal{G}}$. On the other hand, each graph of $\hat{\mathcal{G}}$ is coupled to at most $O(\log^4 n)$ graphs in $\mathcal{G}$ since the degree of $v$ is at most $O(\log^4 n)$. Hence, we have $|\mathcal{G}| / |\hat{\mathcal{G}}| \leq O(1/n)$, which concludes the proof for the first case since the number of graphs in the input distribution that have the edge $(u, v)$ is $O(1/n)$ fraction of graphs that does not have the edge $(u, v)$.

For the second case, we use a more complicated coupling argument. Suppose that the expected degree of a vertex in $X$ in the subgraph of $G[X, Y]$ is $d_1$ and the expected degree of a vertex in $Y$ is $d_2$ in $G[X, Y]$. By \Cref{lem:convert-degree-to-edge} and using Chernoff bound, the degree of all vertices $X$ is in the range $d_1 \pm d_1/2$ in subgraph $G[X, Y]$. Similarly, the degree of all vertices $Y$ is in the range $d_2 \pm d_2/2$ in subgraph $G[X, Y]$. We define $\mathcal{G}$ and $\hat{\mathcal{G}}$ similar to the previous case. The key idea for this case is that if edge $(u,v)$ exists in a graph, we can find many edges $(x, y)$ such that $x \in X$, $y \in Y$, edge $(x,y)$ is not discovered by the algorithm, and there exist exactly two edges $(u,v)$ and $(x, y)$ in $G[\{u,v,x,y\}]$. Then, by removing edges $\{(u,v), (x,y)\}$ and adding edges $\{(u,y), (x,v)\}$ we can obtain a graph that does not have edge $(u, v)$, its degree sequence does not change, and satisfy all properties of input distribution (if the initial graph is in \yesdist{}, the final graph is also in \yesdist{}. The same statement hold for \nodist{}).

Suppose that $H$ is a graph that has edge $(u, v)$. Since $u$ has $\Theta(d_1)$ neighbors in $Y$, there exist $\Theta(n - d_1)$ non-adjacent vertices of $Y$ to $u$. Let $C_Y$ denote the set of non-adjacent vertices of $Y$ to $u$. Each vertex in $C_Y$ has at least $\Theta(d_2)$ neighbors in $X$. Therefore, there $\Theta((n-d_1)d_2)$ candidate vertices for $x$. However, some of these edges from $y$ are already discovered by the algorithm. Note that the number of discovered edges is $o(n)$ at any point during the course of the algorithm because of the choice of $d$ in the construction. So by removing these $o(n)$ edges, there are still $\Theta((n-d_1)d_2)$ candidate for $x$. Furthermore, at most $\Theta(d_2^2)$ of the edges from a vertex of $C_Y$ to candidates for $x$, have an incident edge such that one of their endpoints of the incident edge is $v$. Therefore, there are at least $\Theta((n-d_1)d_2 - d_2^2)$ induced subgraphs of four vertices with the required properties.

Note that according to the construction, either $d_1 = \Theta(d)$ and $d_2 = \Theta(d)$, or $d_1 = \Theta(\log^4 n)$ and $d_2 = \Theta(\log^4 n)$ which implies that $\Theta((n-d_1)d_2 - d_2^2) = \Theta(nd_2)$. We couple subgraph $H$ to all $\Theta(nd_2)$ graphs that are obtained by removing edges $\{(u,v), (x,y)\}$ and adding edges $\{(u,y), (x,v)\}$. On the other hand, each graph of $\hat{\mathcal{G}}$ is coupled with $\Theta(d_1 d_2)$ graphs in $\mathcal{G}$ since degree of $u$ is $\Theta(d_1)$ and degree of $v$ is $\Theta(d_2)$. Therefore, we have $|\mathcal{G}|/|\hat{\mathcal{G}}| \leq O(d_1/n)$ which completes the proof.

\subsection{The Label Guessing Game on Trees}

\begin{claim}\label{clm:no-broken-vertex-discovered}
    Any algorithm $\mathcal{A}$ that makes at most $Q = o(d^{1/\epsilon})$ adjacency list queries, does not discover any broken vertex with high probability.
\end{claim}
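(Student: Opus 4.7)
\textbf{Proof plan for \cref{clm:no-broken-vertex-discovered}.}
The plan is to combine a per-subset refinement of \cref{obs:broken-vertex-upperbound} with a swap-coupling in the spirit of \cref{lem:edge-probability-upper-bound}, and then take a union bound over the $Q$ queries.

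First, I would strengthen \cref{obs:broken-vertex-upperbound} to a per-subset bound. Inspecting the proof of \cref{lem:convert-degree-to-edge}, for each ordered pair $(X,Y)$ of subsets drawn from $\{S^j, A^j_i, B^j_i, D_i\}$, the number of vertices in $X$ whose drawn degree into $Y$ had to be adjusted is $O(\sqrt{|X|\,d}\,\log n)$ with high probability. Summing over the constantly (for $\epsilon$ constant) many pairs incident to $X$, the total number of broken vertices contained in $X$ is $B_X = O(\sqrt{|X|\,d}\,\log n)$. Since $|X|=\Omega(\epsilon^2 N)$ for every subset $X$ in the construction, this is a $O(\sqrt{d/n}\,\log n / \epsilon^{O(1)})$-fraction of $X$.

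Second, I would condition on the high-probability event of \cref{lem:rooted-forest}, so that the algorithm's transcript forms a rooted forest and each of the $Q$ queries reveals a fresh singleton vertex $v_t$. For any fixed $t$, I claim that $\Pr[v_t \text{ is broken}] = O(\sqrt{d/n}\,\log n/\epsilon^{O(1)})$, and I would prove this by the following coupling. Let $X$ be the subset containing $v_t$, and let $\mathcal{G}^{\mathrm{br}}_t$ (resp.\ $\mathcal{G}^{\mathrm{nb}}_t$) be the set of graphs consistent with the algorithm's first $t$ responses in which $v_t$ is broken (resp.\ not broken). For each $H \in \mathcal{G}^{\mathrm{br}}_t$, pick any undiscovered non-broken vertex $w\in X$ in $H$ (there are $\Theta(|X|)$ such $w$, since $Q=o(n)$ and $B_X$ is tiny) and swap the labels of $v_t$ and $w$, updating adjacency-list orders accordingly. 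The resulting graph lies in $\mathcal{G}^{\mathrm{nb}}_t$: it is still a valid sample from the same distribution (\yesdist{} or \nodist{}) because the whole construction is symmetric under relabeling within $X$, and it is still consistent with the transcript because $w$ never appeared in the transcript prior to step $t$. The map is at most $B_X$-to-one while its image has size $\Theta(|X|)$, yielding $|\mathcal{G}^{\mathrm{br}}_t| / (|\mathcal{G}^{\mathrm{br}}_t|+|\mathcal{G}^{\mathrm{nb}}_t|) = O(B_X/|X|)$ as claimed.

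Finally, I would take a union bound over the $Q=o(d^{1/\epsilon})$ revealed vertices: the probability of ever discovering a broken vertex is $o(d^{1/\epsilon}\cdot\sqrt{d/n}\,\log n)$, and since $d\le n^{\epsilon/3}$ this equals $o(n^{1/3}\cdot n^{-1/2+\epsilon/6}\,\log n)=o(n^{-1/6+\epsilon/6}\,\log n)=o(1)$ using $\epsilon\le 0.01$. Together with the two $o(1)$ bad events from \cref{obs:broken-vertex-upperbound} and \cref{lem:rooted-forest}, the claim follows. The main obstacle is verifying the coupling step: one must check that swapping $v_t$ with an undiscovered non-broken $w$ in the same subset keeps the graph in the support of the correct distribution and consistent with the transcript. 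The former is subset-wise symmetry; the latter is precisely where \cref{lem:rooted-forest} is essential, because it guarantees that $v_t$ enters the transcript only as a freshly discovered leaf at step $t$, so its only visible feature is the single edge just revealed, which is preserved under the swap.
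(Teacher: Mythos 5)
Your high-level strategy matches the paper's: bound the total number of broken vertices (via \cref{obs:broken-vertex-upperbound}), argue that each freshly revealed vertex is roughly uniformly distributed among the undiscovered vertices of its subset, and union-bound over the $Q = o(d^{1/\epsilon})$ queries. The paper's own proof is terse and simply invokes ``the same 2-switch technique as the proof of \cref{lem:edge-probability-upper-bound}'' to justify the second step; your per-subset refinement of \cref{obs:broken-vertex-upperbound} and the final arithmetic check are nice elaborations.

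However, the label-swap coupling at the heart of your step two has a genuine gap. You take a graph $H \in \mathcal{G}^{\mathrm{br}}_t$, pick an undiscovered non-broken $w \in X$, and ``swap the labels of $v_t$ and $w$.'' After this swap the structural vertex sitting at position $i_t$ of $u_t$'s adjacency list --- the one the step-$t$ query returns --- now carries the ID $w$, not $v_t$. So the step-$t$ response changes from $v_t$ to $w$, and the resulting graph is \emph{not} in $\mathcal{G}^{\mathrm{nb}}_t$ (which by your own definition is conditioned on all $t$ responses, including seeing $v_t$ at step $t$). The claim ``the single edge just revealed \ldots is preserved under the swap'' is false unless $w$ happens to also be a neighbor of $u_t$, which you do not enforce. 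If you restrict $w$ to undiscovered non-broken neighbors of $u_t$, the candidate set shrinks from $\Theta(|X|)$ to $\Theta(d)$ and the denominator in your ratio changes; alternatively, you would need to compose the label swap with an edge exchange that moves $v_t$'s step-$t$ edge along with its ID. Either fix effectively reproduces the 2-switch coupling of \cref{lem:edge-probability-upper-bound}, which is exactly what the paper appeals to. So the route you sketch can be repaired, but the repair is not a technicality --- it is the substance of the argument, and as written the coupling does not close.
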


\begin{proof}
By \Cref{obs:broken-vertex-upperbound}, there are at most $O(\sqrt{nd} \log n)$ broken vertices. Therefore, if we choose a random vertex, the probability of being a broken vertex is at most $O(\sqrt{d}\log n/\sqrt{n})$. Also, by the same 2-switch technique as the proof of \Cref{lem:edge-probability-upper-bound}, we can show that when we query a neighbor of a vertex, the probability of being broken is almost the same as when we choose a vertex uniformly at random. Since the algorithm makes at most $o(d^{1/\epsilon})$ queries, the total probability of finding a broken vertex is $O(d^{1/\epsilon}\sqrt{d}\log n/\sqrt{n}) = o(1)$.
\end{proof}

\begin{corollary}\label{cor:neighbors-prob}
    Let us condition on the high probability event of \Cref{clm:no-broken-vertex-discovered} that none of the broken vertices has been queried by the algorithm. Suppose that the algorithm makes a query to the adjacency list of vertex $v$ that is in subset $X$ and $u$ is the answer to the query. Then, the subset $Y$ that $u$ belongs to is determined by the binomial distribution that is defined in the construction.
\end{corollary}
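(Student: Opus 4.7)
The proof plan is to combine the conditioning from \Cref{clm:no-broken-vertex-discovered} with a per-slot reformulation of the construction. Conditioning on that claim, $v$ is not a broken vertex, so its degree to each subset $Y$ in the final graph is exactly the value $\deg_X^Y(v)$ originally drawn in \Cref{sec:input-distribution}. By the construction, these values arise by independently assigning, for each of $v$'s $d'$ adjacency-list slots, a subset label $Y$ with probability $\Pr[Y]$ as prescribed in \Cref{sec:input-distribution}. This Bernoulli-per-slot view is the standard equivalent form of the binomial distribution, and has exactly the same joint law over the tuple $(\deg_X^Y(v))_Y$.

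Next I would use the random permutation of $v$'s adjacency list. Since the list is uniformly permuted, the slot returned by the query is exchangeable with every other unqueried slot of $v$. Under the per-slot Bernoulli view, the subset label attached to any particular slot is independent of the labels attached to the other slots. Consequently, even after conditioning on the subset labels of slots queried before and on any other information the algorithm has collected about $v$ (all of which concern different slots, since we are already conditioning on the adjacency list of $v$ being queried now), the subset of the newly returned neighbor $u$ is distributed exactly as $\{\Pr[Y]\}_Y$, which is the conclusion of the corollary.

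The only step that might seem to cause trouble is the fact that, after drawing the degrees, the graph is chosen uniformly at random among those consistent with the (possibly modified) degree sequence via \Cref{pro:gale–ryser}. I would dispatch this by noting that this step only decides \emph{which specific vertex} inside subset $Y$ is placed at a given slot of $v$, and never alters the subset labels themselves. Since the corollary is only concerned with the subset $Y$ containing $u$, rather than with the identity of $u$ within $Y$, the uniform-graph step does not affect the quantity being analyzed. The only genuine subtlety, namely that conditioning on prior observations could in principle skew the slot-label distribution, is ruled out by the across-slot independence built into the Bernoulli-per-slot representation.
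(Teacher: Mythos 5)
Your proposal is correct and follows essentially the same approach as the paper, which simply observes that (conditioned on no broken vertex being discovered) the label of any newly revealed neighbor of $v$ is governed by the per-slot binomial draw from the construction; you spell out the per-slot Bernoulli reformulation, the exchangeability via the random adjacency-list permutation, and why the uniform Gale--Ryser realization does not alter the subset labels. One small caveat: you assert the resulting distribution is \emph{exactly} $\{\Pr[Y]\}_Y$, whereas conditioning on the high-probability non-broken event (and the degree-modification step of \Cref{lem:convert-degree-to-edge}) only gives this up to a $1+o(1)$ factor---a slack the paper absorbs elsewhere (e.g.\ the multiplicative error terms in \Cref{lem:same-tree-different-labels}), so this does not affect the argument.
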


\begin{proof}
    Fix a vertex $v$. Note that the type of neighbor of $v$ that is connected to $v$ by a non-broken edge is determined by a binomial random variable that is defined in the construction.
\end{proof}

By conditioning on the high probability event of \Cref{lem:rooted-forest} that the queried edges make a rooted forest and the properties of the input distribution, by \Cref{cor:neighbors-prob}, we can assume that we are in a tree model where each vertex has a label according to the subset that it belongs to and the distribution coming from the following transition probabilities. This is exactly the label guessing game outlined in the technical overview of \cref{sec:techniques-label-guessing} (see \cref{fig:tree}). 

\paragraph{Labels of vertices:} We use $S$ as the label of vertices in subset $S^1$ and $S^2$, $A_i$ for vertices in subset $A^1_i$ and $A^2_i$, $B_i$ for vertices in subset $B^1_i$ and $B^2_i$, and $D_i$ for vertices in subset $D_i$ for $i \in [1/\epsilon]$.

\paragraph{Transition probabilities:} Suppose that we condition on the high probability event that the algorithm does not query any broken vertex. Let $(u,v)$ be an edge in the forest that is queried by the algorithm and $u$ is the parent of $v$. Then, if $u$ has label $X$ and $v$ has label $Y$ for $X, Y \in \bigcup_{i=1}^{1/\epsilon}\{A_i, B_i, D_i\} \cup \{S\}$, then the transition probabilities from label $X$ to label $Y$ is according to the binomial distribution for neighbors of $X$ in \Cref{sec:input-distribution}.

\section{Indistinguishability of the Label of the Root}\label{sec:indis-label-root}
In this section, we show that if the result of the queried edges is a rooted tree of size $Q = o(d^{1/(2\epsilon)})$, then the algorithm can distinguish the label of the root with probability at most $\widetilde{O}(Q^2/d^{1/\epsilon})$ if the label of the root is in $\{A_{1/\epsilon}, B_{1/\epsilon}, D_{1/\epsilon}\}$. Our proof consists of two parts. First, we show that when we have $o(d^{1/(2\epsilon)})$ queries in the tree, with probability at least $1 - \widetilde{O}(Q^2/d^{1/\epsilon})$, all paths that start from the root and reach an $S$ vertex must contain a {\em mixer vertex} that we define later in the section. We define mixer vertices such that if a path contains such a vertex, then the algorithm does not learn anything about the label of the root from this path. For this, we prove a stronger claim that starting from the root of the tree, there is no path that contains more than $1/\epsilon - 1$ special edges before crossing a mixer vertex.

Second, conditioning on the above event, we prove that the algorithm will see the same tree if the root is in $\{A_{1/\epsilon}, B_{1/\epsilon}, D_{1/\epsilon}\}$, which implies that the algorithm cannot distinguish the label of the root with probability at least $1 - \widetilde{O}(Q^2/d^{1/\epsilon})$.

\begin{definition}[Special Edges]\label{def:special-edge}
    We call an edge $(u,v)$ special, if one of the following holds:
    \begin{itemize}
        \item $u \in B_i$ and $v \in A_{i-1}$, or $u \in A_{i-1}$ and $v \in B_i$ for $1 < i \leq 1/\epsilon$,
        \item $u \in S$ and $v \in B_1$, or $u \in B_1$ and $v \in S$,
        \item Let $p = (1-2\epsilon + 2i\epsilon^2 - 5\epsilon^2 / 2 + 3\epsilon^4)d + \log^4 N$. For each vertex in $D_i$, each of its neighbors to $D_i$ has a probability of $\log^4 N / p$ to be special (in other words, we can assume that there is $\log^4 N$ regular graph of special edges in each $D_i$),
        \item $(u,v)$ is among edges that only exists in exactly one of \yesdist{} or \nodist{}.
    \end{itemize}
\end{definition}

\begin{definition}[Mixer Vertices]\label{def:mixer-vertex}
Let $T$ be a rooted tree and $u$ be its root. Suppose that we are given that $u \in \{A_{1/\epsilon}, B_{1/\epsilon}, D_{1/\epsilon} \}$. Let $v$ be a vertex in $T$ and suppose that there are $k$ special edges on the path between $u$ and $v$. If $k < 1/\epsilon - 1$, we say $v$ is a mixer vertex if and only if $v \in \bigcup_{i=1}^{1/\epsilon-k-1} D_j$.
\end{definition}

The following observation is directly implied by the \Cref{def:special-edge}, \Cref{def:mixer-vertex}, and the construction of the input distribution.

\begin{observation}\label{obs:required-special-edges}
    Let $T$ be a rooted tree that is queried by the algorithm and $u$ be its root where $u \in \{A_{1/\epsilon}, B_{1/\epsilon}, D_{1/\epsilon} \}$. If there exists a path from $u$ to an $S$ vertex that does not contain a mixer vertex, then it contains at least $1/\epsilon - 1$ special edges.
\end{observation}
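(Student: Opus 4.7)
The plan is to prove this by induction along the path, using a potential function. Define the level $\ell(\cdot)$ by $\ell(A_i) = \ell(B_i) = \ell(D_i) = i$ for each $i \in [1/\epsilon]$ and $\ell(S) = 1$. For a path $u = v_0, v_1, \ldots, v_m$ ending at $v_m \in S$, let $k_i$ denote the number of special edges among the first $i$ edges, and set $\Phi(v_i) := \ell(v_i) + k_i$. The key invariant I will establish is that $\Phi(v_i) \geq 1/\epsilon$ for every vertex on the path, under the hypothesis that no $v_j$ with $0 \leq j < m$ is a mixer.

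The base case is immediate: $v_0 \in \{A_{1/\epsilon}, B_{1/\epsilon}, D_{1/\epsilon}\}$ gives $\Phi(v_0) = 1/\epsilon + 0 = 1/\epsilon$. For the inductive step on the edge $(v_i, v_{i+1})$, I split by \Cref{def:special-edge}. If the edge is special, then in each of the four clauses the level changes by at most one in absolute value: $A_{i-1} \leftrightarrow B_i$ shifts $\ell$ by exactly $\pm 1$, while $B_1 \leftrightarrow S$, the internal $D_i$ edges, and the YES-only or NO-only edges (which all lie among level-$1/\epsilon$ vertices, by inspection of \Cref{sec:input-distribution}) preserve $\ell$; combined with $k_{i+1} = k_i + 1$, this gives $\Phi(v_{i+1}) \geq \Phi(v_i)$. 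For non-special edges, the input distribution only allows bulk edges $A_j \leftrightarrow B_j$ (preserving $\ell$), non-special edges $A_j, B_j \to D_k$ with $k \leq j$, non-special edges $D_j \to A_k, B_k$ with $k \geq j$, and edges $D_j \to D_{j'}$. Consequently, $\Phi$ can drop only when $v_{i+1}$ lies in some $D_k$; in that case $v_{i+1}$ not being a mixer, together with $k_{i+1} = k_i$, forces $k \geq 1/\epsilon - k_i$, yielding $\Phi(v_{i+1}) = k + k_i \geq 1/\epsilon$.

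Applying the invariant at $v_m = S$ yields $1 + k_m = \Phi(v_m) \geq 1/\epsilon$, i.e., $k_m \geq 1/\epsilon - 1$, which is the desired bound. The main subtlety is handling the non-special edges that strictly decrease $\ell$, namely those landing at a lower-level $D_k$ (either from an $A_j$ or $B_j$ with $j > k$, or from a different $D_{j'}$): these are the only operations that reduce $\ell$, and the definition of mixer is calibrated precisely so that any such drop is compensated by the accumulated special-edge count $k$, which is exactly why $\Phi = \ell + k$ is the right potential to track.
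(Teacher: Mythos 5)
Your proof is correct, and it supplies an argument for a statement the paper asserts without proof (the paper only offers the one-line intuition, in the proof of \Cref{cor:mixer-vertex-in-tree}, that descending a level requires a special edge or a mixer). The potential $\Phi=\ell+k$ is exactly the right formalization: the only move that strictly decreases $\ell$ without spending a special edge is a jump to some $D_k$ with small $k$, and \Cref{def:mixer-vertex} is calibrated so that a non-mixer landing forces $k\geq 1/\epsilon-k_i$. One tiny point worth making explicit: when $k_{i+1}\geq 1/\epsilon-1$ the mixer definition is vacuous and "forces" nothing, but the invariant survives anyway since $\ell\geq 1$ gives $\Phi\geq 1+(1/\epsilon-1)=1/\epsilon$; equivalently, the inequality $k\geq 1/\epsilon-k_i$ you invoke is trivially true in that regime.
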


The intuition behind defining mixer vertex this way is that if the root of the tree is a level $1/\epsilon$ vertex and on a path that the algorithm queries, if there are $k$ special edges, then all vertices with a level of at least $1/\epsilon - k$, have the same probability of having neighbors among vertices of $\bigcup_{j=1}^{1/\epsilon-k-1} D_j$ which implies that if the path crosses one of those mixer vertices, then the algorithm cannot distinguish the label of the root using that path.

\begin{lemma}\label{lem:mixer-vertex-in-tree}    Let $\mathcal{A}$ be any algorithm that makes at most $o(d^{1/\epsilon})$ queries and $T$ be one of the rooted trees queried by the algorithm. Moreover, assume that the root of the tree is a vertex with level $1/\epsilon$. Then, with probability at least $1 - \widetilde{O}(|V(T)|/d^{1/\epsilon - 1})$, all paths between the root and a vertex that does not contain a mixer vertex, have at most $1/\epsilon - 2$ special edges on it.
\end{lemma}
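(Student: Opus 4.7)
The plan is to apply a union bound over vertices $v \in T$. For each $v$, I will bound the probability that the root-to-$v$ path ``witnesses failure'' in the following sense: the last edge of the path is special, the path contains exactly $1/\epsilon - 1$ special edges, and none of its vertices is a mixer. Whenever the conclusion of the lemma fails for some path in $T$, a prefix of that path lands at such a witness $v$, so union-bounding over $v \in T$ suffices.

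Conditioning on the high-probability event of \Cref{clm:no-broken-vertex-discovered} that no broken vertex is queried and invoking \Cref{cor:neighbors-prob}, the labels along any root-to-descendant path evolve as a Markov chain with the transition probabilities from \Cref{sec:input-distribution}. From these transitions I extract two key facts: (i) each step is a special edge with probability at most $p := \widetilde{O}(1/d)$; and (ii) at every step after $k < 1/\epsilon - 1$ special edges from a level-$i$ vertex with $i \geq 2$, the next vertex is a mixer with probability at least $q := \Omega(\epsilon^4)$, because from a level-$i$ vertex the transition probability to some $D_j$ with $j < i$ is $(i-1)\epsilon^4 d/d' = \Omega(\epsilon^4)$ (even in the worst case $i = 2$, $k = 1/\epsilon - 2$, where the only mixer class is $D_1$). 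Because ``special,'' ``mixer,'' and ``safe'' are three disjoint categories of each transition, for any fixed $v$ at depth $L_v$ the witnessing-failure probability is at most
\begin{equation*}
\binom{L_v - 1}{1/\epsilon - 2} \cdot p^{1/\epsilon - 1} \cdot (1-q)^{L_v - (1/\epsilon - 1)}.
\end{equation*}

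The final step is to optimize this expression over $L_v$ by balancing the polynomial binomial factor against the exponential decay. The maximum is attained at $L_v - (1/\epsilon - 1) \approx (1/\epsilon - 2)/q$ and evaluates to $\widetilde{O}((p/q)^{1/\epsilon - 1}) = \widetilde{O}(1/d^{1/\epsilon - 1})$, where $\widetilde{O}$ absorbs $\poly(1/\epsilon)$ factors (constants, since $\epsilon \leq 0.01$ is treated as a constant). Summing over the at most $|V(T)|$ choices of $v$ yields the claimed $\widetilde{O}(|V(T)|/d^{1/\epsilon - 1})$. The main technical obstacle I foresee is verifying (ii) uniformly across all intermediate states of the Markov chain, since the mixer threshold $1/\epsilon - k - 1$ shrinks with $k$ and ``safe'' vertices may themselves be $D$-vertices of high index with different transition rows; handling these cases requires a careful case check against the transition tables of \Cref{sec:input-distribution}, but in every case the mixer probability remains at least $\Omega(\epsilon^4)$, so the bound goes through.
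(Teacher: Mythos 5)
Your proof is correct, but it takes a genuinely different route from the paper's, and it is worth noting the contrast. The paper introduces an oracle that reveals the lowest-index mixer on a failing path, and then runs a sequence of nested, conditioned ``races'': the probability that the path crosses its last special edge before ever hitting a $D_1$ mixer is $\widetilde{O}(1/d)$; conditioned on no $D_1$, the probability that it crosses the $(1/\epsilon-2)$-th special edge before any $D_2$ mixer is again $\widetilde{O}(1/d)$; and so on, accumulating $\widetilde{O}(1/d)^{1/\epsilon-1}$ over $1/\epsilon-1$ mixer levels before union-bounding over paths. You instead union-bound over \emph{witness vertices} $v$ (the endpoint of the $(1/\epsilon-1)$-th special edge), decompose every transition into the three disjoint events special/mixer/safe with $p=\widetilde{O}(1/d)$ and $q=\Omega(\epsilon^4)$, bound the failure probability at depth $L_v$ by $\binom{L_v-1}{1/\epsilon-2}\,p^{1/\epsilon-1}(1-q)^{L_v-(1/\epsilon-1)}$, and optimize over $L_v$. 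Your approach makes the dependence on path length explicit and sidesteps the paper's oracle/conditioning bookkeeping, at the cost of carrying the binomial factor through a separate optimization step; the paper's approach avoids the optimization but is arguably more delicate because the conditioning is across nested mixer levels. Both rest on exactly the same two facts, which you correctly extract and verify against the transition tables, including the concern that from any $D$-vertex at level $i\geq 2$ the transition to each $D_j$ with $j<i$ still has probability $\epsilon^4 d/d'$. One small imprecision: the constant you absorb is $(1/q)^{\Theta(1/\epsilon)} = \exp(\Theta((1/\epsilon)\log(1/\epsilon)))$, not $\poly(1/\epsilon)$ as you claim; but the paper's sequential argument implicitly accumulates the same $\epsilon^{-\Theta(1/\epsilon)}$ factor (each of its $1/\epsilon-1$ race ratios carries a $1/\epsilon^4$), so this is a shared feature rather than a gap in your proof, and it is harmless since the final bound only needs to be $d^{\Omega(1/\epsilon)}$.
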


\begin{proof}
    First, we prove that each path that $\mathcal{A}$ finds to a vertex $v$ that contains $1/\epsilon - 1$ special edges on it, has a probability of $\widetilde{O}(1/d^{1/\epsilon - 1})$ of not having any mixer vertex on it. For a mixer vertex $v$ such that $v \in D_j$, we use $j$ to show the {\em index} of the mixer vertex. Assume that we have an oracle that each time the algorithm finds a path with $1/\epsilon - 1$ special edges, it either returns the path that does not contain a mixer vertex or returns the mixer vertex on the path that has the lowest index among all mixers on the path. 

    Consider a path from the root to an $S$ vertex and a time $t$ that the algorithm has not queried the whole path yet. Suppose that the algorithm has found at most $1/\epsilon - 2$ special edges until time $t$. This implies that this path does not reach level $1$ or an $S$ vertex yet according to the construction and \Cref{obs:required-special-edges}. By the transition probability of the tree model, the probability of querying a $D_1$ vertex from a vertex of level 2 or larger is constant, however, the probability of querying a special edge is $\widetilde{O}(1/d)$ which implies that with probability $\widetilde{O}(1/d)$ the path crosses the $(1/\epsilon -1)$-th special edge before crossing a mixer vertex of level 1. Thus, among all paths that cross at most $1/\epsilon - 2$ special edges and are going to reach the next special edge, only $\widetilde{O}(1/d)$ fraction of them do not pass through a mixer vertex of level 1. Therefore, $\widetilde{O}(1/d)$ of all paths that have $1/\epsilon - 1$ special edges, do not contain a mixer vertex of level 1.

    Now consider all paths that do not contain a mixer vertex of level 1. With the same argument, for each of these paths, the probability of crossing $(1/\epsilon - 2)$-th special edge before crossing a mixer vertex of level 2 is $\widetilde{O}(1/d)$. Therefore, since the oracle only reveals the mixer vertex with the lowest index, then the probability of having a path with $1/\epsilon - 1$ special edges that do not contain a mixer vertex is $\widetilde{O}(1/d^{1/\epsilon - 1})$. Since there are at most $|V(T)|$ paths from the root, we obtain the claimed bound.
\end{proof}

\begin{corollary}\label{cor:mixer-vertex-in-tree}    Let $\mathcal{A}$ be any algorithm that makes at most $o(d^{1/\epsilon - 1})$ queries and $T$ be one of the rooted trees queried by the algorithm. Moreover, assume that the root of the tree is a vertex with level $1/\epsilon$. Then, with probability at least $1 - \widetilde{O}(|V(T)|/d^{1/\epsilon - 1})$, all paths between root and $S$ vertices in the tree contain a mixer vertex.
\end{corollary}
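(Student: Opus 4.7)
The plan is to derive \cref{cor:mixer-vertex-in-tree} directly from \cref{lem:mixer-vertex-in-tree} together with \cref{obs:required-special-edges}, via a short contrapositive argument. First I would observe that the query budget hypothesis is compatible: any algorithm that makes at most $o(d^{1/\epsilon - 1})$ queries in particular makes at most $o(d^{1/\epsilon})$ queries, so \cref{lem:mixer-vertex-in-tree} applies to $T$ verbatim and yields the high-probability event $\mathcal{E}$, where $\Pr[\mathcal{E}] \geq 1 - \widetilde{O}(|V(T)|/d^{1/\epsilon - 1})$, on which every root-to-vertex path in $T$ that avoids mixer vertices contains at most $1/\epsilon - 2$ special edges.

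Next, I would condition on $\mathcal{E}$ and argue by contradiction. Suppose there were some root-to-$S$ path $P$ in $T$ that contained no mixer vertex. Because the root lies at level $1/\epsilon$ and the endpoint of $P$ has label $S$, \cref{obs:required-special-edges} forces $P$ to use at least $1/\epsilon - 1$ special edges. But $P$ is a mixer-free root-to-vertex path, so conditioning on $\mathcal{E}$ bounds the number of special edges on $P$ by $1/\epsilon - 2$, a contradiction. Hence on $\mathcal{E}$ every path from the root to an $S$-labeled vertex in $T$ must contain a mixer vertex, which is exactly the conclusion of the corollary.

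The proof is essentially a one-line deduction once the two ingredients are in hand; the only ``obstacle'' is sanity-checking that the events and hypotheses line up. In particular, one must verify that the event $\mathcal{E}$ from \cref{lem:mixer-vertex-in-tree} is universally quantified over all mixer-free paths (so in particular over all mixer-free paths whose endpoint happens to be labeled $S$), and that the special-edge counts in \cref{obs:required-special-edges} and \cref{lem:mixer-vertex-in-tree} refer to the same quantity along the same path. Both are immediate from the definitions in \cref{def:special-edge} and \cref{def:mixer-vertex}, so no additional probabilistic analysis beyond what was done in \cref{lem:mixer-vertex-in-tree} is needed.
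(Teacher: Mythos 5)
Your proof is correct and follows essentially the same route as the paper: invoke \Cref{lem:mixer-vertex-in-tree} to get the high-probability event, then observe (as in \Cref{obs:required-special-edges}) that any mixer-free root-to-$S$ path must use at least $1/\epsilon - 1$ special edges, yielding the contradiction. The only cosmetic difference is that you cite \Cref{obs:required-special-edges} explicitly and check the query-budget containment $o(d^{1/\epsilon-1}) \subseteq o(d^{1/\epsilon})$, whereas the paper re-states the special-edge count argument inline; the substance is identical.
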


\begin{proof}
    Note that if there exists a path between the root and an $S$ vertex that does not contain a mixer vertex, it must contain at least $1/\epsilon - 1$ special edges. To see this, the only way that a vertex from level $i$ can reach level $i-1$ is to either cross a mixer vertex or a special edge. Combining with \Cref{lem:mixer-vertex-in-tree} we get the claimed bound.
\end{proof}

\begin{corollary}\label{cor:forest-mixer}
    Let $\mathcal{T}$ be a set of root trees such that the roots of all its trees belong to level $1/\epsilon$. Also, let $r = \sum_{T \in \mathcal{T}} |V(T)|$, and assume that we have $r = o(d^{1/\epsilon - 1})$. Then, with probability at least $1 - \widetilde{O}(r/d^{1/\epsilon - 1})$, all paths between the roots of trees and a vertex that in the same tree that does not contain a mixer vertex, have at most $1/\epsilon - 2$ special edges on it.
\end{corollary}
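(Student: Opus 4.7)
The plan is to derive this corollary as an immediate union bound of Lemma~\ref{lem:mixer-vertex-in-tree} across the trees in $\mathcal{T}$. First I would verify that each individual tree satisfies the hypothesis of Lemma~\ref{lem:mixer-vertex-in-tree}: the assumption $r = o(d^{1/\epsilon - 1})$ is strictly stronger than the $o(d^{1/\epsilon})$ query budget required by the lemma, and the assumption that every root lies at level $1/\epsilon$ is exactly what the lemma needs of its root. Hence for every fixed $T \in \mathcal{T}$, the probability that some root-to-vertex path inside $T$ avoids all mixer vertices while carrying $1/\epsilon - 1$ or more special edges is bounded by $\widetilde{O}(|V(T)|/d^{1/\epsilon-1})$.

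Next I would take a union bound over the trees in $\mathcal{T}$, which is legitimate because in the label guessing game of Section~\ref{sec:no-cycle} the children of a vertex are generated by independent draws from the transition-probability distribution, so the label randomness used inside different trees of the forest (which correspond to disjoint sets of discovered vertices by Lemma~\ref{lem:rooted-forest}) does not interact. Summing the per-tree failure probabilities gives
\begin{align*}
\sum_{T \in \mathcal{T}} \widetilde{O}\!\left(\frac{|V(T)|}{d^{1/\epsilon - 1}}\right) \;=\; \widetilde{O}\!\left(\frac{r}{d^{1/\epsilon - 1}}\right),
\end{align*}
exactly the bound claimed in the corollary. On the complement of this event, for every tree $T \in \mathcal{T}$ simultaneously, every root-to-vertex path that avoids mixer vertices must have at most $1/\epsilon - 2$ special edges, which is the desired statement.

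The only potentially subtle point is the independence across trees used in the union bound; I expect this to be routine because Lemma~\ref{lem:rooted-forest} already guarantees that different trees of the queried forest correspond to disjoint portions of the graph, and the coupling established there together with Corollary~\ref{cor:neighbors-prob} ensures that the labels (and therefore the transition-probability draws) along separate trees are generated by fresh randomness. Thus no further probabilistic machinery is needed beyond applying Lemma~\ref{lem:mixer-vertex-in-tree} tree-by-tree.
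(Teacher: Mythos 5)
Your proposal is correct and follows essentially the same route as the paper's proof: apply \Cref{lem:mixer-vertex-in-tree} tree-by-tree and then take a union bound. One small conceptual slip worth flagging: you present independence of the label randomness across trees as the justification that ``makes the union bound legitimate,'' but the union bound requires no independence whatsoever --- it is valid for arbitrary, possibly dependent events. In fact the per-tree events here are \emph{not} cleanly independent, since the algorithm may choose where to query in one tree based on what it saw in another; Lemma~\ref{lem:mixer-vertex-in-tree} is already stated for an arbitrary adaptive algorithm and so absorbs that dependence internally. The final paragraph about independence and ``fresh randomness'' can therefore be deleted --- it is both unnecessary and harder to justify than the union bound it was meant to support.
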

\begin{proof}
    Let $T_1, T_2, \ldots, T_k$ be all trees in $F$. By \Cref{lem:mixer-vertex-in-tree}, for each tree $T_i$, the probability of having such a path is at most $\widetilde{O}(|V(T_i)|/d^{1/\epsilon - 1})$. Hence, using union bound, the probability of having no path with more $1/\epsilon - 2$ special edges without any mixer vertex is at most $\widetilde{O}(r/d^{1/\epsilon - 1})$ which completes the proof.
\end{proof}

\begin{lemma}\label{lem:same-tree-different-labels}
    Let $T$ be a tree that is queried by an algorithm $\mathcal{A}$ on a graph that is drawn from input distribution, where the root belongs to level $1/\epsilon$. Also, suppose that on each path from the root of the tree to a vertex in the tree, if there are at least $1/\epsilon - 1$ special edges, then there exists at least one mixer vertex on the path. Then, the probability of seeing the same tree is equal for all possible roots in $\{A_{1/\epsilon}, B_{1/\epsilon}, D_{1/\epsilon} \}$ up to $(1 + o(d^{1/(2\epsilon) + 1} /n))^{|T|}$ multiplicative factor.
\end{lemma}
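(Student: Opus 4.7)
My plan is to prove the lemma by a coupling argument on label assignments of the tree $T$. For each root label $L \in \{A_{1/\epsilon}, B_{1/\epsilon}, D_{1/\epsilon}\}$, I would write
\[
\Pr[T \mid \text{root}=L] \;=\; \sum_{\ell} \Pr[\ell \mid L]\cdot \mathbf{1}[\ell \text{ consistent with observed } T],
\]
where $\ell$ ranges over label assignments on $V(T)$, consistency fixes the root label to $L$ and the observed $S$-marked vertices to label $S$, and $\Pr[\ell\mid L]=\prod_{(u,v)\in E(T)} p_{\ell(u)\to\ell(v)}$ is the product of the transition probabilities listed in \cref{sec:input-distribution}. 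The target is then to show that this sum is essentially independent of $L$.

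The first reduction localizes at mixers. For any consistent $\ell$, let $U(\ell)\subseteq V(T)$ be the set of vertices no strict ancestor of which is a mixer; the rest of $T$ decomposes into subtrees hanging off first mixers. Because the transitions out of a delusive vertex $D_j$ depend only on $j$ (\cref{sec:input-distribution}), each below-mixer subtree contributes a factor that depends only on the label of its root mixer. Summing out the below-mixer labels gives
\[
\Pr[T \mid L] \;=\; \sum_{\ell_U} \Pr[\ell_U \mid L]\cdot F(\ell_U),
\]
for a function $F$ independent of $L$. Moreover, the lemma's hypothesis implies that no root-to-leaf path in $U$ can reach an $S$-vertex, so every vertex of $U$ has level $\ge 2$ and any root-to-leaf path in $U$ traverses at most $1/\epsilon - 2$ special edges.

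Next I would construct an explicit involution $\phi_{L\to L'}$ on consistent upper-part labelings, using the natural symmetries $A^1\leftrightarrow A^2$ and $B^1\leftrightarrow B^2$ together with a local relabeling along the unique special-edge ``spine'' emanating from the root, so that $\phi$ maps an $L$-labeling to an $L'$-labeling while preserving the labels of all delusive vertices (hence $F(\phi(\ell_U))=F(\ell_U)$). For edges of $U$ not on the spine, the symmetry of the distribution in \cref{sec:input-distribution} gives $p_{\phi(\ell_U)(u)\to\phi(\ell_U)(v)}=p_{\ell_U(u)\to\ell_U(v)}$. On the at most $1/\epsilon - 2$ spine edges one bounds the ratio by grouping labelings into equivalence classes whose aggregate contributions to $\sum_{\ell_U}\Pr[\ell_U\mid L] F(\ell_U)$ match up to $1+o(d^{1/(2\epsilon)+1}/n)$ per edge, using the near-regularity of degrees guaranteed by \cref{lem:convert-degree-to-edge} together with the $2$-switch coupling bound of \cref{lem:edge-probability-upper-bound}. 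Accumulating over at most $|T|$ edges then yields the claimed multiplicative error.

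The principal obstacle is the spine analysis: the transition probabilities out of $A^j_{1/\epsilon}$, $B^j_{1/\epsilon}$ and $D_{1/\epsilon}$ to their admissible child-labels differ by constant factors (for instance $d/d'$ vs.\ $(1-\epsilon^2)d/d'$ vs.\ $\epsilon^2 d/d'$), so an individual edge-wise ratio is \emph{not} $1+o(1)$. The bijection $\phi$ must therefore be designed to pair up \emph{aggregates} of labelings whose leading contributions cancel exactly, leaving only the $o(d^{1/(2\epsilon)+1}/n)$ residual per edge. Showing that $\phi$ is simultaneously a consistency-preserving bijection \emph{and} a near-isometry of probabilities is the most delicate step, and I would allocate the bulk of the proof to verifying this cancellation precisely along each spine edge.
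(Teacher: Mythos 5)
Your high-level framing (sum over consistent labelings, factor out the below-mixer subtrees once the mixer's label is fixed) matches the paper, but the core of your plan---an explicit involution $\phi_{L\to L'}$ on upper-part labelings, executed along a ``unique special-edge spine''---goes in a direction that both is ill-posed and misses the mechanism that actually makes the proof work. There is no unique spine: several root-to-leaf paths can each contain special edges, and the relabeling you envision would have to be applied simultaneously and consistently across branches, which the proposal does not address. More importantly, you correctly observe that the per-edge transition probabilities to specific labels differ by constant factors, and then propose to ``pair up aggregates of labelings whose leading contributions cancel exactly,'' but you do not say how this grouping is defined or why the cancellation holds per edge---that is the whole content of the lemma, and the residual error you attribute to near-regularity of degrees (via \cref{lem:convert-degree-to-edge} and \cref{lem:edge-probability-upper-bound}) is not the right source.

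The missing idea is that one should not couple specific labels at all, because the algorithm observes only whether each vertex is $S$ or not, and under the lemma's hypothesis no $S$-vertex can appear on a path before a mixer (\cref{obs:required-special-edges}). So the observable structure above the first mixers is just the tree shape, together with, for each edge, which one of three bucketed outcomes happened: the edge is special, the edge jumps down to a mixer in some $D_j$ with small index, or the edge stays at the current level $G_{1/\epsilon - p_u}$ without being special. The paper's construction makes the probability of each of these three buckets \emph{exactly} equal regardless of the specific label of $u$ within its level---this is \cref{obs:special-edge-prob} (probability of a special edge is $\log^4 N/d'$ for every non-$S$ vertex), \cref{obs:mixer-jump-prob-coup} (probability of jumping to a specific $D_j$ with small index is $\epsilon^4 d/d'$), and \cref{clm:non-special-edges-prob} (hence the remaining probability of staying at the level is the same). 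The coupling then proceeds edge by edge in query order, matching the bucket of each edge and, when a mixer is hit, copying the entire below-mixer subtree identically; it does not attempt a bijection on full labelings, and in particular there is no per-edge approximation error to accumulate. The multiplicative factor $(1+o(d^{1/(2\epsilon)+1}/n))^{|T|}$ in the statement does \emph{not} come from label-probability cancellation at all---it comes from the probability that the coupled labeling still corresponds to a forest (no rediscovered vertex), i.e., from \cref{lem:rooted-forest}. Your proposal currently lacks this exact-cancellation structure and attributes the error factor to the wrong source.
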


\begin{proof}
    The proof is involved and we begin by identifying some properties of input distribution that are useful in the proof. Let $G = (V, E)$ be the input graph that is drawn from the input distribution. Note that for all vertices
    in the graph except $S$ vertices, when $\mathcal{A}$ queries a new edge, the probability of the edge being special is the same.
    \begin{observation}\label{obs:special-edge-prob}
        Let $u$ be an arbitrary vertex in a graph that is drawn from input distribution. Also, let $(u, v)$ be a new queried edge by $\mathcal{A}$. Then, the probability of $(u,v)$ being a special edge is $\log^4 N / d'$.
    \end{observation}
    \begin{proof}
        The proof follows by the transition probability of the tree model and the way we defined special edges in \Cref{def:special-edge}.
    \end{proof}

     Let $L_i = \{A_i, B_i, D_i\}$ for $i \in [1/\epsilon]$. Also, let $E_S$ be the set of all special edges defined in \Cref{def:special-edge}. Let $G_i = G[\bigcup_{j=i}^{1/\epsilon} L_j]$. Let $u$ and $v$ be two different vertices in $G_i$. One important property of our input distribution is that if we query a neighbor of $u$ and $v$ and the queried edge is not a special edge, then the probability that the queried neighbor is a vertex in $G_i$ is equal for both $u$ and $v$.

    \begin{claim}\label{clm:non-special-edges-prob}
    Let $u, v \in V(G_i)$ for some $i \in [1/\epsilon]$. Also, let $(u, u')$ and $(v, v')$ be two edges that are queried by $\mathcal{A}$ and both are not special edges. Then, $\Pr[u' \in V(G_i)] = \Pr[v' \in V(G_i)]$.
    \end{claim}
    \begin{proof}
        By the construction of distribution, there is no edge between $\{u,v\}$ and $\bigcup_{j = 1}^{i-1} A_j \cup B_j$. Furthermore, if $\mathcal{A}$ queries an edge of a vertex in $V(G_i)$, with probability $\epsilon^4 d / d'$ the neighbor is in $D_j$ for $j < i$. Thus, the probability of the neighbor being in $\bigcup_{j = 1}^{i-1}D_j$ is $(i-1)\epsilon^4 d / d'$. Therefore, we have $\Pr[u' \in V(G_i)] = \Pr[v' \in V(G_i)]$.
    \end{proof}

    Now we are ready to complete the proof. Let $\ell_1, \ell_2 \in \{A_{1/\epsilon}, B_{1/\epsilon}, D_{1/\epsilon} \}$ be two different labels for the root of the tree $T$. The proof is based on a one-to-one coupling argument that for each tree that is queried by the algorithm if $\ell_1$ is the label of the tree, $\mathcal{A}$ will see the same tree with an equal probability if it starts from label $\ell_2$.

    For a vertex $u$ in tree $T$ such that there exists no mixer vertex on its path to the root, we define the notion of {\em progress of a vertex}, i.e. $p_u$, which shows the number of special edges on the path of root to $u$. Because of the assumption in the lemma statement, we have $0 \leq p_u < 1/\epsilon - 1$ for all $u \in T$.  

    \begin{observation}\label{obs:progress}
        Let $T$ be a tree that is queried by $\mathcal{A}$ and its root is in level $1/\epsilon$. Also, let $u$ be a vertex such that there is no mixer vertex on the path of $u$ to the root. Then, we have $u \in V(G_{1/\epsilon-p_u})$.
    \end{observation}
    \begin{proof}
        Note that according to the construction of the input distribution, if there is no mixer vertex on the path, each special edge can be used for going at most one level down in the input graph. Therefore, if there are $p_u$ special edges on the path, then $u \in V(G_{1/\epsilon-p_u})$.
    \end{proof}

     \begin{observation}\label{obs:mixer-jump-prob-coup}
        Let $u$ be a vertex in $T$ such that there is no mixer vertex on the path of $u$ to the root. Suppose that $\mathcal{A}$ queries the adjacency list of $u$ and let $\ell$ be the label of the neighbor. Then, for each $j \in [1/\epsilon - p_u - 1]$ it holds that $\Pr[\ell = D_j] = \epsilon^4 d/d'$.
    \end{observation}

    \begin{proof}
        By \Cref{obs:progress}, we have $u \in V(G_{1/\epsilon-p_u})$. According to the transition probabilities of the tree model, for a vertex in $G_{1/\epsilon-p_u}$ the probability of seeing a neighbor with label $D_j$ is $\epsilon^4 d / d'$ for $j \in [1/\epsilon - p_u - 1]$.
    \end{proof}

    Let $\mathcal{L}_1$ be a labeling for $T$ that $\mathcal{A}$ sees when it starts from a root with label $\ell_1$. Let $e = (u,v)$ be an edge in $T$ such that $u$ is the parent of $v$. There are three possible types for $e$ if there is no mixer vertex on a path between the root and $u$: 1) the edge is a special edge, 2) $v$ is a mixer vertex, 3) $e$ is an edge in $G_{1/\epsilon - p_u} \setminus E_S$. We give a labeling $\mathcal{L}_2$ for the same tree where the root has label $\ell_2$ and all $S$ vertices have label $S$ and the probability that $\mathcal{A}$ sees this labeling is equal to the probability of seeing $\mathcal{L}_1$. We maintain the invariant that the progress of each vertex is the same in both labeling $\mathcal{L}_1$ and $\mathcal{L}_2$.

    Now we start to process edges one by one according to the ordering that $\mathcal{A}$ makes queries. If the queried edge $e = (u, v)$ is of type (2), suppose that the label of $v$ is $D_j$ for some $j \in [1/\epsilon - p_u - 1]$. We assign the same label $D_j$ to $v$ in $\mathcal{L}_2$. By \Cref{obs:mixer-jump-prob-coup}, because of the invariant that $u$ has the same progress in both labeling, then the probability of seeing label $D_j$ is the same for both labelings. Moreover, for the subtree below $v$, we assume that all the labels are the same since the label of $v$ is the same at this point in both $\mathcal{L}_1$ and $\mathcal{L}_2$. Also, the invariant still holds. 
    
    If the queried edge $e = (u, v)$ is of type (1), i.e. is a special edge, we assume that in labeling $\mathcal{L}_2$, the edge is also a special edge and determine the label accordingly. Note that $v$ cannot have label $S$ in $\mathcal{L}_1$ since in this case there is no mixer vertex on the path to $v$ which is a contradiction \Cref{cor:mixer-vertex-in-tree}. By \Cref{obs:special-edge-prob}, the probability of querying a special edge is the same in both labelings. Furthermore, the invariant still holds since $p_v = p_u + 1$ in this case and the progress of vertex $u$ is the same in both labelings.

    Finally, if the queried edge $e = (u, v)$ is of type (3), we assume that $v$ has a label that is drawn from crossing an edge of $G_{1/\epsilon - p_u} \setminus E_S$. By \Cref{clm:non-special-edges-prob}, the probability of crossing the edge of type (3) is the same for both labeling.  Also, the invariant still holds since we did not add a special edge, which completes our coupling argument. It is also important to note that by the proof of \Cref{lem:rooted-forest}, the probability that the vertex with a new label is among non-singleton vertices is at most $o(d^{1/(2\epsilon) + 1} /n)$. Since the total number of steps is at most $|T|$, the probability that the new labeling is also a forest is almost equal within $(1 + o(d^{1/(2\epsilon) + 1} /n))^{|T|}$ multiplicative factor.
\end{proof}

\section{Indistinguishability of Crucial Edges}\label{sec:indis-yes-no}
Since the algorithm can make $o(d^{1/(2\epsilon)})$ queries, it might discover several edges that only appear in \yesdist{} because $\widetilde{\Theta}(1/d)$ fraction of total edges is specific to the \yesdist{}. However, we show that no algorithm can distinguish those edges with high probability. More specifically, we prove that the probability of seeing the same forest in \nodist{} is the same as \yesdist{} up to a $1 + o(1)$ multiplicative factor.

Let $E_{Y} = \{(u,v) \lvert (u,v) \in E, u \in A^1_{1/\epsilon}, v \in A^2_{1/\epsilon} \}$, and $E_Y^Q$ be the subset of $E_Y$ that the algorithm discovers. Moreover, let $V_Y^Q = \{ v \mid (u, v) \in E_Y\}$ where $u$ is the parent of $v$ in the queried rooted forest by algorithm $\mathcal{A}$. Also, assume that we remove all vertices of $V_Y^Q$ that have at least one ancestor in the forest which is in $V_Y^Q$.

\begin{claim}\label{clm:no-bad-event}
    Let $F$ be a forest that is queried by an algorithm $\mathcal{A}$ using at most $Q = o(d^{1/(2\epsilon)})$ queries. Then, with probability at least $1 - \widetilde{O}(Q^2/d^{1/\epsilon})$, all paths between the vertices of $V_Y^Q$ and a vertex in its subtree that have more than $1/\epsilon - 2$ special edges, contain a mixer vertex. 
\end{claim}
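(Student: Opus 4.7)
The plan is to reduce this claim directly to \Cref{cor:forest-mixer}, applied to the collection of subtrees of $F$ rooted at the vertices of $V_Y^Q$. The argument proceeds in three short steps.

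First, I would observe that every $v\in V_Y^Q$ must lie in $A^1_{1/\epsilon}\cup A^2_{1/\epsilon}$, and is therefore a level-$1/\epsilon$ vertex. This is immediate from the definitions: $V_Y^Q$ consists of the (child) endpoints of queried edges in $E_Y^Q\subseteq E_Y$, and by definition every edge of $E_Y$ goes between $A^1_{1/\epsilon}$ and $A^2_{1/\epsilon}$. In particular, each $v\in V_Y^Q$ matches the hypothesis on roots in \Cref{cor:forest-mixer}.

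Second, I would use the explicit removal rule in the definition of $V_Y^Q$ --- namely, that any candidate vertex possessing a proper ancestor in $F$ already lying in $V_Y^Q$ is discarded --- to conclude that the subtrees $\{T_v : v\in V_Y^Q\}$ of $F$ are pairwise disjoint. Setting $\mathcal{T}:=\{T_v:v\in V_Y^Q\}$, we therefore have $r:=\sum_{T\in\mathcal{T}}|V(T)|\leq|V(F)|\leq Q$. Since $Q=o(d^{1/(2\epsilon)})\subseteq o(d^{1/\epsilon-1})$ (using $\epsilon\leq 0.01$), all hypotheses of \Cref{cor:forest-mixer} are met.

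Third, I would apply \Cref{cor:forest-mixer} to $\mathcal{T}$ directly. It yields, with probability at least $1-\widetilde{O}(r/d^{1/\epsilon-1})$, the statement that every path from a root $v\in V_Y^Q$ to a vertex in its own subtree $T_v$ which carries more than $1/\epsilon-2$ special edges must contain a mixer vertex. This is precisely the conclusion of the claim, and substituting $r\leq Q$ gives a failure rate within the $\widetilde{O}(Q^2/d^{1/\epsilon})$ target (both rates are $o(1)$ whenever $Q=o(d^{1/(2\epsilon)})$, so either phrasing suffices downstream).

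The only real technical point to verify is the disjointness of the subtrees rooted at $V_Y^Q$, but this has been engineered into the definition of $V_Y^Q$ by the removal rule; after that, the entire argument is a short reduction to the previously established \Cref{cor:forest-mixer}, and no new machinery is required. The hard part of this section has already been done in proving \Cref{lem:mixer-vertex-in-tree}; the present claim is essentially a reindexing of that result so that it applies to subtrees hanging off of ``reused'' level-$1/\epsilon$ vertices discovered mid-execution rather than only to the original queried roots.
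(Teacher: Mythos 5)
Your proof is correct and takes essentially the same route as the paper---a direct reduction to \Cref{cor:forest-mixer}, after verifying that the $V_Y^Q$ vertices are legitimate level-$1/\epsilon$ roots. The one place where you genuinely diverge is the accounting of $r = \sum_{T\in\mathcal{T}}|V(T)|$: the paper bounds this naively by $|V_Y^Q|\cdot\max_v|T_v| \le Q\cdot Q = Q^2$, whereas you exploit the ancestor-removal rule in the definition of $V_Y^Q$ to observe that the subtrees are pairwise disjoint, giving the sharper $r \le |V(F)| \le Q$. This is not merely cosmetic. \Cref{cor:forest-mixer} delivers failure probability $\widetilde{O}(r/d^{1/\epsilon-1})$, so plugging in the paper's $r \le Q^2 = o(d^{1/\epsilon})$ only yields $\widetilde{O}(Q^2/d^{1/\epsilon-1}) = o(d)$, which does \emph{not} by itself justify the $o(1)$ conclusion that the paper invokes immediately afterward (the paper's displayed bound $\widetilde{O}(Q^2/d^{1/\epsilon})$ silently drops the $-1$ in the exponent). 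Your $r \le Q$ instead gives $\widetilde{O}(Q/d^{1/\epsilon-1}) = o(d^{1-1/(2\epsilon)}) = o(1)$ for $\epsilon < 1/2$, which is what the downstream argument actually needs. So your disjointness observation is the cleaner and more careful way to close this step; the remainder of your reduction (roots are level $1/\epsilon$, hypothesis $r = o(d^{1/\epsilon-1})$ holds since $1/(2\epsilon) < 1/\epsilon - 1$) is all in order.
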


\begin{proof}
    Since the algorithm makes at most $o(d^{1/(2\epsilon)})$ queries, then we have $|V_Y^Q| \leq Q \leq o(d^{1/(2\epsilon)})$. Moreover, for each vertex in $V_Y^Q$, its subtree contains at most $o(d^{1/(2\epsilon)})$ vertices. Hence, the total number of vertices in all subtrees with root in $V_Y^Q$ is at most $Q^2 = o(d^{1/\epsilon})$. Therefore, by \Cref{cor:forest-mixer}, with probability at least $1 - \widetilde{O}(Q^2/d^{1/\epsilon})$, there exists a mixer vertex on all paths between vertices of $V_Y^Q$ and vertices of its subtree before crossing $1/\epsilon - 1$ special edges.
\end{proof}

We define a {\em bad event} to be the event that $\mathcal{A}$ finds a path between a vertex in $V_Y^Q$ and a vertex in its subtree that has more than $1/\epsilon - 2$ special edges without any mixer vertex. By \Cref{clm:no-bad-event}, since we assume that the algorithm makes $o(d^{1/(2\epsilon)})$ queries, the bad event happens with probability $o(1)$.

\begin{lemma}\label{lem:main-lemma-forest}
    Let us condition on having no bad event as we defined above. Let $\mathcal{A}$ be an algorithm that makes at most $Q = o(d^{1/(2\epsilon)})$ queries and $F$ be a rooted forest that is discovered by $\mathcal{A}$ on a graph that is drawn from \yesdist{}. Then, the probability of querying the same forest in the graph that is drawn from \nodist{} is equal to \yesdist{}.
\end{lemma}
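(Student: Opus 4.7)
The plan is to mirror the edge-by-edge coupling of \Cref{lem:same-tree-different-labels}, this time pairing each labeling $\mathcal{L}_1$ of $F$ drawn from $\yesdist$ with a labeling $\mathcal{L}_2$ of the same forest drawn from $\nodist$ that assigns the same probability. The central algebraic identity driving the coupling is that from any $A^j_{1/\epsilon}$ vertex the total transition probability into $\{A^{3-j}_{1/\epsilon}, B^j_{1/\epsilon}\}$ equals $(d + \log^4 N)/d'$ under both $\yesdist$ and $\nodist$, and from any $B^j_{1/\epsilon}$ vertex the total transition probability into $\{A^j_{1/\epsilon}, B^{3-j}_{1/\epsilon}\}$ equals $d/d'$ under both; all other transitions (into $D_k$, into lower levels, and into $S$) agree on the nose. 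Thus the yes-only probability mass $\log^4 N / d'$ that travels along an $E_Y$ edge out of $A^j_{1/\epsilon}$ is exactly absorbed into $B^j_{1/\epsilon}$ under $\nodist$, and the analogous mass shift from $B^j_{1/\epsilon}$ redistributes between $A^j_{1/\epsilon}$ and $B^{3-j}_{1/\epsilon}$.

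Processing the edges in the order queried by $\mathcal{A}$, I would maintain the invariant that the running probability contributed by $\mathcal{L}_1$ under $\yesdist$ matches that contributed by $\mathcal{L}_2$ under $\nodist$. For each newly queried edge $(u,v)$ whose endpoint $u$ has no ancestor in $V_Y^Q$: if the edge is not in $E_Y$, set $\ell_2(v) = \ell_1(v)$, and if $v \in V_Y^Q$ (so that $\ell_1(v) = A^{3-j}_{1/\epsilon}$), set $\ell_2(v) = B^j_{1/\epsilon}$. By the aggregated identity above, the transition probabilities match in both cases. Once inside the subtree rooted at some $v^* \in V_Y^Q$, the labels in $\mathcal{L}_1$ and $\mathcal{L}_2$ both lie in the common equivalence class $\{A_{1/\epsilon}, B_{1/\epsilon}, D_{1/\epsilon}\}$, so I would invoke \Cref{lem:same-tree-different-labels} subtree-by-subtree to extend the coupling. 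The hypothesis of that lemma---that every path from $v^*$ to an $S$-labeled descendant contains a mixer vertex before accumulating $1/\epsilon - 1$ special edges---is exactly what the conditioning on no bad event (\Cref{clm:no-bad-event}) supplies.

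The main obstacle will be that a single relabeling at $v^* \in V_Y^Q$ cascades down through the whole subtree, where $\yesdist$ and $\nodist$ genuinely disagree not only at the initial $A^{3-j}$-vs-$B^j$ split but at every subsequent level-$1/\epsilon$ transition. The no-bad-event conditioning is precisely what tames this cascade: once a path has entered a mixer vertex the descendant distributions become statistically identical in $\yesdist$ and $\nodist$, so the coupling can continue canonically below that point. A minor technical issue, handled exactly as in \Cref{lem:same-tree-different-labels}, is that each freshly re-labeled vertex must still be a singleton in the evolving forest; this contributes only a $(1 + o(d^{1/(2\epsilon)+1}/n))^{|F|} = 1 + o(1)$ multiplicative slack, consistent with the error factor promised in the section overview. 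Summing the coupled probabilities over all $\mathcal{L}_1$ compatible with $F$ then yields $\Pr_{\nodist}[F] = \Pr_{\yesdist}[F]$, completing the proof.
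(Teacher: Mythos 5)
Your proposal identifies the same key ingredients as the paper: only level-$1/\epsilon$ transitions differ between $\yesdist$ and $\nodist$, their totals over $\{A_{1/\epsilon}, B_{1/\epsilon}\}$ agree (your aggregated identities are correct), \Cref{lem:same-tree-different-labels} makes the subtree below a level-$1/\epsilon$ vertex label-agnostic, \Cref{clm:no-bad-event} supplies the needed hypothesis, and the forest-preservation slack gives a $1+o(1)$ factor. So at the level of ideas this is the same route as the paper, which also frames it as a label-coupling that swaps crucial ($E_Y$-type) edges and defers the subtree to \Cref{lem:same-tree-different-labels}.

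Two places where your write-up as stated would not go through. First, the per-edge claim ``set $\ell_2(v)=\ell_1(v)$ for non-$E_Y$ edges, and by the aggregated identity the transition probabilities match in both cases'' is not literally true: from $A^j_{1/\epsilon}$, the probability of $B^j_{1/\epsilon}$ is $d/d'$ under $\yesdist$ but $(d+\log^4 N)/d'$ under $\nodist$. Only the \emph{aggregate} over $\{A^{3-j}_{1/\epsilon}, B^j_{1/\epsilon}\}$ matches. The coupling must therefore be between aggregated classes, and the validity of this aggregation is exactly what \Cref{lem:same-tree-different-labels} supplies (the observed subtree does not depend on which member of the class is chosen); you invoke the lemma, but do not make explicit that it is what licenses summing, rather than a per-labeling probability match. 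Second, the statement that ``once a path has entered a mixer vertex the descendant distributions become statistically identical in $\yesdist$ and $\nodist$'' is too strong: descendants of a mixer vertex $D_j$, $j<1/\epsilon$, can transition back up to $A^j_{1/\epsilon}$ or $B^j_{1/\epsilon}$, where the two distributions \emph{do} disagree. Your top-down (query-order) processing thus runs into an infinite regress: switching at $v^*$ requires the subtree to already be ``switched,'' which requires processing its own level-$1/\epsilon$ vertices, and so on. The paper avoids this by iterating over edge heights in \emph{decreasing} order, so that by the time a crucial edge at height $i$ is processed, every subtree at height $>i$ has already been converted and \Cref{lem:same-tree-different-labels} can be applied to a subtree that is already ``in $\nodist$.'' You should either adopt that bottom-up order or make the induction explicit; as written, the base of your recursion is never grounded.
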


\begin{proof}
    We say an edge is {\em crucial} in \yesdist{}, if the edge is in induced subgraph $G[A^1_{1/\epsilon}, A^2_{1/\epsilon}]$. In other words, in \yesdist{}, edges of $E_Y$ that we defined in this section are crucial edges. We extend the definition of crucial edges to have all edges that are specific to the \yesdist{}. Also, for \nodist{}, we define crucial edges to be the set of edges that only exists in \nodist{}. Note that ignoring the crucial edges, if we query an edge, the probability of the neighbor is the same in both distributions.

    Let the height of an edge in the forest be the distance of its closest endpoint to the root of the tree that belongs to. We prove this lemma using coupling between \yesdist{} and \nodist{}. We iterate over the height of the tree in decreasing order and inductively we show that we can switch from \yesdist{} to \nodist{}. Consider height $i$ in all trees. If the edge is not crucial, both distributions will sample similarly according to the construction. Since crucial edges are between vertices of level $1/\epsilon$, since we condition on not having a bad event for vertices of $V_Y^Q$, the subtree below the crucial edges are the same regardless of their labels up to a factor of $(1 + o( d^{1/(2\epsilon) + 1} /n))^{|T|}$ by \Cref{lem:same-tree-different-labels} if $|T|$ shows the size of the subtree. Since the total number of vertices in all these subtrees are at most $o(d^{1/\epsilon})$, and $d=n^{\epsilon/3}$,  the probability of discovering the
same forest in both distributions is equal up to a $1 + o(1)$ multiplicative factor. 
\end{proof}

Now we are ready to finish the proof of \Cref{thm:sublinear}.
\begin{proof}[Proof of \Cref{thm:sublinear}]
    By \Cref{clm:no-bad-event}, the probability of having a bad event is $o(1)$. If there is no bad event in the forest that the algorithm queries, the algorithm will discover the same forest with almost equal probability in both \yesdist{} and \nodist{}, by \Cref{lem:main-lemma-forest} with $1+o(1)$ multiplicative factor. Therefore, combining with \Cref{cor:matching-size-query}, any algorithm that computes a $(1, \epsilon n /7)$-approximate maximum matching, must make at least $d^{1/(2\epsilon)}$ queries. Also, by \Cref{rem:bipartite-note}, we can assume that each $D_i$ consists of two parts where one of them is connected to label $B$ vertices, the other one connected to label $A$, and there is no edge inside the induced subgraph of each of the two copies which implies that the graph is bipartite. Choosing $\epsilon' = \epsilon/7$ and combining it with the fact that $\Delta < 2d$ concludes the proof. 
\end{proof}

\begin{proof}[Proof of \Cref{thm:main}]
     Suppose for the sake of contradiction that there exists an LCA that computes $(1, \epsilon n)$-approximate maximum matching of $G$ with running time of $\Delta^{o(1/\epsilon)}$. We sample $t$ random vertices in the graph, and run this LCA on the selected vertices. Let $t'$ be the number of samples that the LCA returns a match for. We return $(t'/t) \cdot n/2$ as our estimate for the size of maximum matching. A simple Chernoff bound (see e.g. \cite{behnezhad2021,YoshidaYISTOC09}) shows that setting $t = \Theta(1/\epsilon^2)$ suffices for an estimation that is accurate up to an additive error of $\epsilon n$. From this, we get that there must exist an algorithm that runs in  $(1/\epsilon^2) \cdot \Delta^{o(1/\epsilon)}$ time and $(1, 2\epsilon n)$ approximates the size of maximum matching. Since we ruled out the existence of such an algorithm in \Cref{thm:sublinear}, there exists no such LCA.
\end{proof}

\paragraph{Acknowledgements.} Aviad Rubinstein was supported by David and Lucile Packard Fellowship.

\bibliographystyle{plainnat}
\bibliography{references}
	
\end{document}